\newtheorem{cor}{Corollary}[section]
\newtheorem{theorem}[cor]{Theorem}
\newcommand{\e}{{\rm e}}
\newcommand{\R}{\mathbb{R}}
\newcommand{\B}{\mathcal B}
\newcommand{\M}{\mathcal M}
\newcommand{\Prm}{\mathcal P}
\newcommand{\refeq}[1]{(\ref{#1})}
\begin{document}
\title{Neuronal calculus for the auditory pathway}
\author{Daniel Aalto$^1$, Hans Martin Reimann$^2$ and Eero Saksman$^3$}
\maketitle

\vspace{12cm}
{$^1$ Department of Communication Sciences and Disorders, University of Alberta, 2-18 Corbett Hall, Edmonton, AB T6G 2G4, Canada}\\
{$^2$ Mathematisches Institut, Universit\"at Bern, Sidlerstrasse 5, CH-3012 Bern, Switzerland}\\
{$^3$ Department of Mathematics and Statistics, University of Helsinki, PO Box 68, FI-00014, Finland}

 \vspace{3cm}
\newpage

\section{Abstract}
The first steps in the neural processing of sound are located in the
auditory nerve and in the cochlear nuclei.
To model the signal processing efficiently, we propose a simple mathematical
tool that takes the minute timing of the system into account.
In contrast to the situation in the cortex,
the number of connections between neurons in auditory periphery
is comparatively low. This gives way to an accurate modeling of the connectivity of the neuronal network. The timing is the all important feature in the peripheral neuronal auditory pathway.
The primary auditory neurons e.g. phase lock to periodic sounds
with important interactions with respect to both the refractory periods of the neurons and to
the time delays caused by traveling times along the basilar membrane or through a synaptic
connection.

The mathematical tools provide a solid basis to build models for peripheral auditory processes.
In particular,
we study carefully a large class of refractory neurons, find analytical formulas for the spiking activity,
and prove that refractory neurons respond to periodic signals by asymptotically periodic output.
The methods rely on
the theory of positive operators and give a numerical scheme for finding fixed points to an integral operator
with geometric convergence rate.
In addition,
we consider a perfect integrator neuron, mathematically equivalent to randomized random walk,
where the random walk is bounded from below, and solve the first passage time problem
using continuous time Markov chain techniques. Our method leads to ordinary differential equations that are linear. The dynamical
behavior can thus be described by classical methods.

In an accompanying paper we set up the simulation framework as a counterpart to the present mathematical model. By suitably adjusting the few parameters in the model it is possible to reproduce the basic patterns of neural activity.

\section{Introduction}

In this paper we try to develop a mathematical tool
that allows to describe signal analysis in the auditory pathway of the brain.
The incoming signals from the inner ear are processed in parallel pathways up to the inferior colliculus.
Essentially starting from this neuronal center information from other parts of the brain are combined
with the auditory components.
A time dependent pattern of neural activity then emerges in the auditory cortex.
Signal processing along the auditory pathway is complex.
There is a wealth of detailed experimental information available
and many of the essential features have been described precisely \cite{inferiorcolliculus}.
To a large extent the activity of specific neurons has been classified
and some information on the topological ordering of different groups of neurons is available.
The challenge is to set up a global picture
and to understand
how the different components combine on the signal processing level.\\
The approach taken here is
to describe the neuronal activity with densities that have a probabilistic interpretation.
The neuronal nuclei are then pictured as transforming these densities in a specific way,
adapted to the physiological significance of these nuclei.
Densities give a less precise description of neurons than models based on dynamical systems would admit.
Yet densities should provide a more efficient tool in the treatment of the interaction of the neuronal nuclei
and they should facilitate keeping track of the patterns of neural activity.
They are specifically designed towards time and intensity coding.

We land at the area of firing rate models (see chapter 11 of \cite{ermentrout2011}) and choose a Poisson type model that allows to represent the firing properties of neurons on the basis of densities. We first analyze the signal processing of a single neuron with respect to the refractory properties. Such effects have already be studied previously
 in \cite{deger}.

 Our method is based on stochastic results from renewal theory. With it, any kind of refractory period can be handled.

For periodic signals we show that any
refractory neuron responds to a periodic input by an asymptotically periodic output.

To model the combined activity of a network of refractory neurons, we use perfect integrate-and-fire with
stochastic input and bounded paths. It turns out that a continuous time finite state Markov chain model (CTMC)
is suitable to the analysis. In particular, the synaptic and transmission delays are easily incorporated.

This approach circumvents Stein's model and stochastic differential equations.  Yet still, it seems to capture the essential features observed with the neurons in the auditory pathway. Our model leads to an
ordinary differential equation that is linear. Explicit solutions can then
be obtained through standard techniques in linear algebra and differential equations.
We e.g. get an easy derivation for the classical formulas
for perfect integrate-and-fire neurons receiving only excitatory inputs.

Our mathematical approach is amenable to modeling and relates to simulations.
In a companion paper this will be investigated in detail.

In order to focus our work, we have asked for transparency and simplicity of the model and our efforts are directed towards using as few parameters as possible.
We take into account the fine timing parameters like the refractory properties of the neuron and the synaptic and
transmission delays. We also include the spontaneous activity of the neuron and
the local architecture of the network.
We omit any neurochemical variables and hence the model is at best only phenomenological.
\\

\section{Stochastic processes for auditory periphery}

A neuron receives spike trains through its dendritic tree and emits a spike
whenever the membrane potential of the cell reaches a cell specific threshold value.
The spikes are narrowly supported in time and have uniform shapes.
Thus the mathematical theory of point processes is widely used for modeling the neurons statistical behavior.
In particular the problems related to the neuronal ensembles can be reformulated in terms of queueing theory.

There are models for single neuron behavior of varying detail.
The most detailed ones are based on the physical model by Hodgkin and Huxley
and contain several parameters related to the chemical and physiological properties of the neuron.
From the mathematical point of view, this model describes the voltage of the membrane in terms of a dynamical system,
i.e. a system of ordinary differential equations.
The simplest model is the integrate-and-fire (IF)
and a more sophisticated is the leaky IF (LIF), which breaks the problem of neuronal firing in two parts:
the time before the neuron fires is modeled with differential equations and once the solution reaches critical level,
the neuron fires and then rests before the process starts anew.

The classical models are deterministic. Stochastic nature of the neural activity can be incorporated in several different ways.
The Stein model \cite{stein65} is a stochastic differential equation model for the sub-threshold membrane voltage evolution.
It assumes excitatory and inhibitory Poisson inputs and an exponential decay of the membrane potential.
This is still a very detailed model and correspondingly, explicit calculations are difficult, e.g. an explicit formula for the first passage time of the system with constant
intensity Poisson inputs is not known (see Sacerdote, Giraudo 2011).
If the leakage term is neglected, the resulting model is randomized random walk (RRW)
for which the first passage time can be derived - assuming constant input - via Laplace transforms \cite{tuckwell89}.
However, there is no formulas available for non-constant stimuli.\\

In the present approach we take every incoming signal to be inherently stochastic by assuming always
an explicit stochastic process,
which comes with a certain intensity
depending on the (acoustic) stimulus and previous processing steps only. The mathematical expression for this intensity is a non-negative time dependent density $s(t)$.
We assume, that the neurons activity is completely described by the nature of the
stochastic process (e.g. Poisson, Gamma type neuron), by the density $s(t)$ that itself is derived from the input signals from other neurons and by
the time of the previous spike of the neuron itself.
To easily combine the activities of the different neurons, we only compute the output intensity
of the spiking and assume that when several of these intensities are combined in a next neuron, the
pooled incoming spikes look like a Poisson process. In the case of constant intensity
processes which are sufficiently regular, the sum of the processes
approaches Poisson process (see \cite{daley}: Proposition 11.2.VI).

We are interested in two types of abstract neurons:
primary neurons are directly stimulated by a continuous variable
like the concentration of neurotransmitter in the
inner hair cell auditory nerve fiber complex -
or a more abstract variable like a probability density;
integrating neurons take electric spike trains from other neurons as input
and process the spike trains according to a rule where the
excitatory input brings the neuron closer to firing and inhibitory
spikes push the neuron away from emitting a spike. The rules are motivated by
the physiological properties of the cells.
We try to characterize the quantitative behavior of these abstract
neurons based on first principles.

\section{The refractory neuron}

\subsection{The definition of the refractory neuron}

An abstract refractory neuron, $n_R$, is taken to be
a stochastic process determined by an inhomogeneous, non-negative function
$s(t)$ and a homogeneous function $r(t)$.
The inhomogeneous function represents the external stimulation of the neuron
and the homogeneous function describes the internal dynamics of the neuron.
As an example, $s(t)$ could describe the amount of the neurotransmitter in the inner hair cell,
and $r(t)$ the refractory properties of the auditory nerve fiber.
More precisely we define the probability of the neuron to emit
a spike infinitesimally through
\begin{equation}
\label{n_R: infinitesimal definition}
P(t_1>t+h|t_1>t)=1-s(t)r(t-t_0)h + O(h^2)
\end{equation}
where $t_0$ is the moment at which the neuron emitted the previous spike before $t_1$
and $O(h^2)$ is an error term with $O(h)/h\to 0$ as $h\to 0$.
We assume throughout
that $s(t)$ is a non-negative locally integrable function, $r(t)=0$ for all negative $t$,
and $r(t)\leq r(u)$ for all $t<u$.
Furthermore we assume that the integral
$$
\int_x^{\infty}s(t)r(t-x)dt
$$
is infinite for any $x>0$.

To determine the probability density of the firing we fix $t>t_0$ and divide the interval $[t_0,t]$ into $n$ equal parts
with $nh=t-t_0$, $a_j = t_0+(j-1)h$ and $b_j = a_j+h$, $j=1,\dots, n$, to have
$$
[t_0,t]=\bigcup_{j=1}^n[a_j,b_j].
$$
By the definition of the conditional probability we have
\begin{eqnarray}
\nonumber
P(t_1>t|t_1>t_0)&=&\frac{P(t_1>t)}{P(t_1>t_0)}\\
\nonumber
&=&
\frac{P(t_1>b_n)}{P(t_1>a_n)}\cdot \ldots \cdot \frac{P(t_1>b_1)}{P(t_1>a_1)}\\
\nonumber
&=&
\prod_{j=1}^nP(t_1>b_j|t_1>a_j).
\end{eqnarray}
Taking logarithm, using the assumption (\ref{n_R: infinitesimal definition}) and linearizing the right hand side logarithms gives
$$
\log P(t_1>t|t_1>t_0)=\lim_{n\to \infty}\sum_{j=1}^n -hs(a_j)r(a_j-t_0)
$$
which can be interpreted as an integral to give
$$
P(t_1>t|t_1>t_0)=\exp \left(-\int_{t_0}^ts(u)r(u-t_0)du\right).
$$
Let us define now the transition probability $p(x,t)$ by differentiating
the conditional probability above with respect to $t$ and setting $t_0=x$ to obtain
\begin{equation}
\label{pxt}
p(x,t)=s(t)r(t-x)\exp \left(-\int_{x}^ts(u)r(u-x)du\right).
\end{equation}
Then $p(x,t)$ is the probability density that the first firing after firing at $t_0=x$ occurs at $t$. We set
\begin{equation}
p(x,t) = 0\quad\textrm{for}\quad t<x.
\end{equation}
Clearly\footnote{We often use the shorthand $\int$ for the definite integral $\int_{-\infty}^{\infty}$, and likewise we abbreviate $\sum=\sum_{-\infty}^{\infty}$, if not otherwise stated or clear from the context.}
\begin{equation}
\int p(x,t)dt = 1
\end{equation}
If the ``first'' firing time has a distribution $p_0$, then the next
firings are given recursively by
$$
p_k(t)=\int_{-\infty}^tp_{k-1}(x)p(x,t)dx,
$$
for $k=1,2,\dots$.
The consecutive firing times form a point process $T_R=\{t_0,t_1,\dots\}$.\\
Analogously, the transition probabilities for the $k$th firing are
$$
p_k(x,t)=\int_{-\infty}^tp_{k-1}(x,z)p(z,t)dz,
$$

Similar derivations can be found in \cite{gerstnerkistler}: 5.2.3,
and \cite{cox}: pp. 4--5.

\subsection{The output of the refractory neuron}

Given either a refractory neuron or a integrate-and-fire neuron as a model for the ``law''
of the neuron, we define the output rate of the neuron.
Let  $T = \{t_0,t_1,\dots\}$ be a point process.
The attached counting process is defined by
$$
N(t) = \{i\geq 0|t_i\leq t < t_{i+1}\},
$$
and we call $M(t) = E(N(t))$ the expected number of neural spikes emitted up to moment $t$.
If $M$ is differentiable, we define
$$
I(t) = \frac{dM(t)}{dt}
$$
to be the instantaneous firing rate. This is the time varying
output rate of the neuron's activity.
In the physiological measurements, the peri stimulus time histogram
(PSTH) corresponds to the instantaneous firing rate. On the other hand,
in stochastic analysis $M(t)$ is called the renewal function.

Mathematically, in the  simplified model treated in this paper, a simple neuron is the transformation of incoming
rate functions $s_-$ and $s_+$ to an output rate function $I(t)$.

\subsection{Instantaneous firing rate}

Given a locally integrable non-negative function $s(t)$ and an initial probability
distribution $\mu(dt)=p_0(t)dt$ for the first spike, we compute the
instantaneous firing rate $I(t)$.

The instantaneous firing rate $I(t)$ of a neuron at time $t$ corresponds
to the joint probability density function $p(x,t)$ of the neuron to fire
at moment $t$ given it fired at $x$ and
the initial probability measure $\mu$ due to the history of the system
up to a starting moment $t=0$ (stimulus onset).
We have (justified by \cite{gihman69}: Chapter 7.7: Theorem 4)

\begin{eqnarray}
\label{instantenous}
I(t) &=& \sum_{k=1}^{\infty}\int_{t_1=0}^{t_2}\dots \int_{t_k=t_{k-1}}^t\int_0^{\infty}
\mu(s)p(s,t_1)\dots p(t_k,t)
dt_1\dots dt_kds\\
&=& \sum_{k=1}^{\infty}p_k(t)
\end{eqnarray}
The definition of instantaneous firing rate is related to the repeated measurements data gathering
procedure,
where the activity of single neurons
is recorded while the auditory system receives acoustic input.
It is tacitly assumed that between the stimuli there is enough time for the
neural system to get back to a equilibrium state
where only the spontaneous activity of neurons persists.
If the spontaneous activity is assumed to be a homogeneous Poisson process, then
the corresponding measure $\mu$ would be just the exponential function
$$
\mu (s) = \lambda \e ^{-\lambda s}ds.
$$
In a mathematically simpler situation, we can assume that the neuron fired at $t=0$, at the onset
of the stimulus. This corresponds to the choice
$$
\mu (s) = \delta (s),
$$
where $\delta$ is the Dirac measure concentrated at the origin.

The case with no refractory period ( $r(t)= \chi_0 (t)$ )
corresponds to the classical Poisson model. In this situation, the instantaneous firing rate
gives back the original intensity

\begin{eqnarray*}
p_m(x;t)
&=&dt \chi_x(t)\,s(t) \,\exp(-\int_x^t s(u)du)\\
& &\int_x^t s(y_{m-1}) dy_{m-1}...\int_x^{y_3} s(y_2) dy_2\int_x^{y_2} s(y_1) dy_1
\end{eqnarray*}
Upon setting $w(t)= \int_x^t s(u) du$,
$s(t)= \frac{d}{dt}w(t)$, one obtains
\begin{eqnarray*}
\int_x^t s(y_{m-1}) dy_{m-1}...\int_x^{y_3} s(y_2) dy_2\int_x^{y_2} s(y_1) dy_1 &=&\frac{w^{m-1}(t)}{(m-1)!}
\end{eqnarray*}
and the transition probabilities have the form
\begin{equation}
p_m(x;t)
= \chi_x(t) \,s(t)\e^{-w(t)}\frac{w^{m-1}(t)}{(m-1)!}.
\end{equation}

The instantaneous firing rate of a neuron at time $t$, given that the neuron fired at time x, is
\begin{equation}
q(x;t)=\sum_{m=1}^{\infty}p_m(x;t)=\chi_x(t)s(t)\e^{-w(t)}\e^{w(t)}
=\chi_x(t)s(t).
\end{equation}
We thus
obtain the original intensity $s$ and observe that the instantaneous firing
rate does not depend on the past. Hence,
$$
\lim_{x\to -\infty}q(x;t)=s(t)
$$
for all $t$
and the convergence is uniform on any finite interval.
In the following sections neurons with non-trivial refractory period will be considered.

\subsection{Densities}

At the level of single neurons, we assume that the activity of a neuron is determined
by the incoming densities $s_i(t) \geq 0$
and a set of parameters: spontaneous activity of the neuron $\sigma$, connection strength $\omega _i$,
connection delay $\tau_i$,
refractory function $r$ and firing threshold $\vartheta$.
We first divide the analysis of the neuron into two independent steps:
input analysis and output generation. The first results in a single pooled activity,
a Poisson process that we approximate by the integrated input density $s$.
The output generation depends on  $s$ and $r$, the refractory component, as well as on $\sigma$ and $\vartheta$. The spontaneous activity $\sigma$ can be subsumed in the calculation of $s(t)$.\\

The infinitesimal probability density that determines the instantaneous firing rate of the output is a product of the integrated input density $s(t)$, and the refractory function $r(t-x)$, translated to the position $x$ of the previous firing time
(see \cite{lutkenhoner80}, \cite{gaumond82}, and \cite{miller92}, the details are given in the next section).

A typical example for $r$ is given by
\begin{equation}
\label{rho_exp}
r_{\rho}(u) = \left\{
\begin{array}{rl}
0 & \text{if } u<\rho_A,\\
1-\exp (-u/{\rho_R}) &\text{if} u \geq \rho_A,
\end{array} \right.
\end{equation}
where the constants $\rho_A$ and $\rho_R$ are absolute and relative refractory periods respectively.
We assume that $r$ is always monotone increasing, grows at most polynomially,
and $r(u)=0$ for all $u<0$. We do not assume continuity for $r$.\\
For the incoming densities we consider two cases:
either we have an auditory nerve fiber, for which a simple inner hair cell model will then provide the input density,
or the input comes from a collection of neurons with specific outputs. In the latter case, we
represent the activities of the neurons that connect over the dendritic tree by
densities $s_i, i\in I$.\\
The (excitatory) inputs are pooled
by taking the sum of the densities $s_i$, weighted with the strength $\omega_i$ of the connections,
and taking into account the delays $\tau _i$ of each path,
\begin{equation}
s(t)=\sum_{i\in I} \omega_i s_i (t-\tau_i),
\end{equation}

where $I$ is the set of indices for the (excitatory) connections.\\

For the refractory neuron only excitatory inputs are taken into account. The situation of mixed, excitatory and inhibitory inputs, will be studied in the context of the integrate-and-fire neuron.\\

The pooling of the densities is motivated by the following addition property\\
\emph{
Assume that the neuron $Y$ fires whenever any of the two input neurons $X_1$ or $X_2$ fire. If the firing densities for $X_1$ and $X_2$ are $s_1$ and $s_2$ respectively, then the resulting output density of the neuron $Y$ is $\tilde{s}= s_1 + s_2$}\\
This statement fails, if refractory periods are involved.\\
The proof is a direct consequence of the exponential law:
\begin{eqnarray}
P(T_1^Y>t|T_0<x)
&=&
1-P(T_1^{X_1}\leq t| T_0^{X_1}<x)P(T_1^{X_2}\leq t|T_0^{X_2}<x)
\nonumber\\
&=&
1
-
\exp \left(-\int_x^ts_1(u)du\right)
\exp \left(-\int_x^ts_2(u)du\right)
\nonumber\\
&=&
1
-
\exp \left(-\int_x^t(s_1(u)+s_2(u))du\right).
\nonumber
\end{eqnarray}
Hence, $Y$ can be obtained by summing the densities of $X_1$ and $X_2$.\\

\subsection{Periodic signals}

The transition probability (see equation (\ref{pxt})) associated to the neuron and to the density $s(t)$ is
$$
p(x,t) = r(t-x)s(t)\e ^{-\int_x^tr(u-x)s(u)du}.
$$
For the periodic case it is assumed that $s$ is 1-periodic. As a consequence
$$
p(x+m,t+m) = p(x,t)
$$
for any integer $m$.\\

In the periodic case, the instantaneous firing rate is well defined and asymptotically does not depend on the initial measure $\mu$. This is a result of Thorisson (\cite{thorisson84}, Theorem 2). In the following we use $\mu = \delta _x$ the dirac measure at $x$. With this,
the instantaneous firing rate of a neuron at $t$ is given by
\begin{equation}
q(x,t)=\sum_{m=1}^{\infty}p_m(x,t)
\end{equation}

Under the additional condition that the factor m defined in equation (\ref{m})  below is finite, the limit

$$
q(t):= \lim_{x\rightarrow -\infty} q(x,t)
$$
exists (\cite{thorisson84}, Theorem 6).\\
In the discussion above, when no refractory period was involved, it was shown that $q(t)$ was equal to the initial density $s(t)$. In the present situation this will be different.\\
Note that the function $q$ is periodic:
$$
q(x,t+1) = \sum_{m=1}^{\infty}p_m(x,t+1)=\sum_{m=1}^{\infty}p_m(x-1,t) = q(x-1,t)
$$
$$
q(t+1)= \lim_{x\rightarrow -\infty} q(x-1,t)= q(t)
$$

The operator $T$ on locally integrable periodic functions is defined as
$$
Tf(t) = \int p(x,t) f(x) dx
$$
and its dual on $L^{\infty}$ by
$$
T^*g(x) = \int p(x,t) g(t) dt
$$
The fact is then that $q$ is invariant under $T$
\begin{eqnarray*}
Tq(t)&=&\int p(x,t) q(x) dx = \int p(x,t) \lim_{z\rightarrow -\infty} q(z,x) dx\\
&=&\lim_{z\rightarrow -\infty} \int p(x,t) \sum_{m=1}^{\infty}p_m(z,x)\\
&=&\lim_{z\rightarrow -\infty}  \sum_{m=1}^{\infty}p_{m+1}(z,t)\\
&=&\lim_{z\rightarrow -\infty}  (q(z,t) -p(z,t))\\
&=& q(t)
\end{eqnarray*}
\\

The operator T can be periodized.
\begin{eqnarray*}
Tf(t) &=& \int p(x,t) f(x) dx\\
&=& \sum_m \int^1_0 p(x+m,t) f(x+m) dx\\
&=&  \int^1_0 \sum_m p(x+m,t) f(x) dx
\end{eqnarray*}
Set
$$
\tilde{p}(x,t) = \sum_m p(x+m,t) = \sum_m p(x,t+m)
$$
then the operator $T$ on periodic functions reduces to
$$
\tilde{T} f(t) := \int^1_0 \tilde{p}(x,t) f(x) dx
$$
on the space $L^1 [0,1]$. Its kernel is periodic in both variables and satisfies
$$
\int^1_0 \tilde{p}(x,t) dt = \int^1_0 \sum_m p(x+m,t) dt = \int^1_0 \sum_m p(x,t+m) dt\\
= \int p(x,t) dt = 1
$$
Furthermore, $q$ considered as a function in $L^1 [0,1]$ is invariant
$$
q(t) = T q(t) = \tilde{T} q(t)
$$

In the case of a $1$-periodic infinitesimal density function it is possible to start with the probability distribution $\tilde{Q}(A)$ that the neuron has fired at $t_0 \in A$ mod 1. The set $A$ is a subset of $[0,1)$ and $\tilde{Q}([0,1)) = 1$. Assume then that the neuron has fired at time $t_0=x$ within the period.

The initial probability $\tilde{Q}$ is transformed into the probability

$$ \int_A dt \int
\tilde{p}(x,t) dQ(x)$$

If $\tilde{Q}$ is mapped onto itself under this transformation, then it can be represented by a density $\tilde{q} dt$
$$\tilde{Q}(A) = \int_A \tilde{q}(t) dt$$
that satisfies the fixed point equation
$$
\tilde{q}(t) = \int_0^1\tilde{p}(x,t)\tilde{q}(x)dx.
$$
The output density $q(t)$ of the neuron driven by the periodic input density $s(t)$ is a fixed point of $T$ and hence of the periodized operator $\tilde{T}$. In general it will not be normalized whereas the fixed point $\tilde{q}$ above is normalized by $\int_0^1 \tilde{q}(t)dt = 1$.

\begin{theorem} (Thorisson)\label{fixedpoint}
The integral operator
$$
\tilde{T}q(t) = \int_0^1\tilde{p}(x,t)q(x)dx.
$$
has a unique normalized fixed point  $\tilde{q} \in L^1([0,1])$. Furthermore
for every non-negative $q_0 \in L^1([0,1])$ with $\|q_0\|=1$ the sequence
$q_n = T q_{n-1}, n=1,2,...$  converges exponentially to $\tilde{q}$ in $L^1([0,1])$.
\end{theorem}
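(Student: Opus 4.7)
The plan is to treat $\tilde{T}$ as a Markov operator on $L^1([0,1])$ and to apply the classical theory of Doeblin minorisation for positive integral kernels. First I would note that $\tilde{T}$ is positive and mass-preserving: positivity is immediate from $\tilde{p}\geq 0$, and the identity $\int_0^1 \tilde{p}(x,t)\,dt = 1$ combined with Fubini shows that $\tilde{T}$ sends probability densities to probability densities. Existence of at least one normalised fixed point may be invoked from Thorisson's Theorem~6 cited above, or alternatively extracted from a soft Krein--Rutman / weak-compactness argument once the Doeblin bound is in place.

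The core step is to establish a Doeblin-type minorisation for some iterate of the kernel: there exist $n\geq 1$, a constant $\varepsilon>0$ and a non-zero non-negative $h\in L^1([0,1])$ such that
$$
\tilde{p}^{(n)}(x,t) \;\geq\; \varepsilon\, h(t)\qquad\text{for a.e. }x,t\in[0,1],
$$
where $\tilde{p}^{(n)}$ is the periodisation of the density of the $n$-th firing time given firing at $x$. The one-step density admits the elementary lower bound $p(x,x+u)\geq r(u)\,s(x+u)\,e^{-Cu}$ on any bounded $u$-range, with $C$ controlled by $\|s\|_{L^1([0,1])}$ and a local bound on $r$. Convolving $n$ such densities and reducing modulo the period spreads the mass into every subinterval where $s$ is essentially positive, and for $n$ large enough the accumulated convolution is bounded below independently of the starting phase $x$.

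With Doeblin in hand, the Dobrushin contraction argument (or equivalently a coupling) yields
$$
\|\tilde{T}^n f_1 - \tilde{T}^n f_2\|_{L^1([0,1])} \;\leq\; (1-\varepsilon)\,\|f_1 - f_2\|_{L^1([0,1])}
$$
for all probability densities $f_1,f_2\in L^1([0,1])$. This single inequality delivers the whole theorem simultaneously: applied to two putative fixed points it forces uniqueness; iterated on any initial $q_0$ it gives exponential Cauchyness and therefore existence of the fixed point $\tilde{q}$ via the Banach contraction principle on the complete metric space of $L^1$-probability densities; and it provides the geometric rate $\|\tilde{T}^k q_0-\tilde{q}\|_1 \leq C(1-\varepsilon)^{\lfloor k/n\rfloor}$ required by the statement.

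The main obstacle is the Doeblin bound itself. Since $s$ is only locally integrable, possibly unbounded and possibly vanishing on large sets, and since $r$ may have an absolute refractory plateau $r\equiv 0$ on $[0,\rho_A)$, one cannot extract a uniform pointwise lower bound on a single step. The hypotheses $\int_x^\infty s(u)r(u-x)\,du=\infty$ and the finiteness of the mean inter-firing factor $m$ are, however, precisely calibrated so that the distribution of the $n$-th firing time becomes spread out after finitely many firings; making this quantitative, by localising $s$ to a subinterval on which it is bounded away from $0$ and from $\infty$ and by choosing $n$ large enough to dominate the refractory support of $r$, should produce the required uniform minorant and complete the argument.
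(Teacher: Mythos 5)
Your strategy is the same as the one the paper uses in Appendix~2: bound the Dobrushin ergodicity coefficient strictly below $1$ by a Doeblin minorisation of the kernel, and then read off uniqueness and geometric $L^1$-convergence from the resulting contraction on probability densities (Theorems~\ref{lemma:dobrushin} and~\ref{lemma:kernel}, Corollary~\ref{adjoint}). The problem is that the minorisation is the entire content of the proof, and you do not establish it: your final paragraph concedes that the uniform lower bound on $\tilde p^{(n)}$ is ``the main obstacle'' and only asserts that a suitable localisation ``should'' produce it. As written, the argument therefore proves nothing beyond the soft part (positivity and mass preservation), and the detour through $n$-step kernels adds a second unproved layer (uniformity of the convolution lower bound in the starting phase $x$, control of the refractory plateau, etc.).

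The missing step is in fact much easier than you make it, because the operator in the theorem acts through the \emph{periodised} kernel $\tilde p(x,t)=\sum_m p(x,t+m)$, which already aggregates arbitrarily long inter-spike waits; iteration is not needed. You are right that the raw one-step density $p(x,t)$ restricted to a single period can vanish identically (e.g.\ if the absolute refractory period exceeds the period of $s$), but the periodisation repairs exactly this. Choose an integer $M$ with $r(M-1)>0$ (such $M$ exists, since otherwise $\int_x^\infty s(u)r(u-x)\,du=0$). Keeping only the term $m=M$ in the sum and using that $r$ is nondecreasing, finite, and that $s$ is $1$-periodic, one gets for all $x,t\in[0,1]$
$$
\tilde p(x,t)\;\geq\; s(t)\,r(t+M-x)\,\exp\Bigl(-\int_x^{t+M}s(u)\,r(u-x)\,du\Bigr)\;\geq\; r(M-1)\,\e^{-r(M+1)(M+2)\|s\|_{L^1(0,1)}}\,s(t)\;=:\;g(t),
$$
and $\int_0^1 g>0$ because $\int_0^1 s>0$ (else the standing divergence assumption fails). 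This is precisely the hypothesis of Theorem~\ref{lemma:kernel} with a minorant depending only on the target variable, so $\delta(\tilde T)\leq 1-\int_0^1 g<1$ and Theorem~\ref{lemma:dobrushin} finishes the proof in one step, as in Corollary~\ref{adjoint}. With this bound inserted, your argument closes and coincides with the paper's.
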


This result is contained in \cite{thorisson84}. An independent proof will be given
in Appendix 2 (Corollary \ref{adjoint}).

The uniqueness statement tells us in particular that the
outgoing infinitesimal probability distribution $q$ of the neuron - provided it exists - is a multiple of the normalized distribution $\tilde{q}$.\\
The expectation time for $t_1$ to occur on the condition that $t_0=x$ is
$$\int t p(x,t) dt$$
The expected delay is thus
$$E(x):=\int t p(x,t) dt-x =\int (t -x) p(x,t) dt$$\\

\begin{theorem} (Thorisson)
\label{constant}
Assume that
\begin{equation}
\label{m}
m=\int_0^1 E(x) \tilde{q}(x)dx <\infty
\end{equation}
where $\tilde{q}$ is the normalized fixed point above. Then the outgoing infinitesimal probability distribution approaches the periodic density
$$
q=\frac{1}{m} \tilde{q}.
$$
This entails that $\displaystyle\lim_{n\to\infty} \int_n^{n+1}|q(x,t)-\tilde q(t)|dt=0$ for every $x$.
\end{theorem}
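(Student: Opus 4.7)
\emph{Plan of proof.}
The computation $Tq=q$ given before the theorem shows that $q$, viewed as a $1$-periodic function, is a non-negative fixed point of $\tilde T$ in $L^1([0,1])$. By the uniqueness clause of Theorem \ref{fixedpoint}, this forces $q=c\tilde q$ for some $c\ge 0$, and the whole task reduces to identifying the normalization constant $c=1/m$.

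The natural viewpoint is a Markov renewal process on the phase space $[0,1)$. The embedded chain $X_k:=t_k\bmod 1$ has transition kernel $\tilde p$ and unique invariant density $\tilde q$, while the sojourn time $t_{k+1}-t_k$ has conditional mean $E(X_k)$. Under the invariant law the mean cycle length is therefore $\int_0^1 E(x)\tilde q(x)\,dx = m<\infty$, so by the elementary renewal theorem applied in this Markov-modulated setting,
\[
\lim_{N\to\infty}\frac{1}{N}\int_0^N q(x,t)\,dt=\frac{1}{m}
\]
for every fixed $x$. On the other hand, letting first $x\to-\infty$ (the convergence is uniform on $[0,1]$ by the exponential contraction in Theorem \ref{fixedpoint}), the $1$-periodic limit $q$ satisfies $\frac{1}{N}\int_0^N q(t)\,dt=\int_0^1 q(t)\,dt=c$. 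Matching the two averages yields $c=1/m$, and hence $q=\tilde q/m$.

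The concluding $L^1$-statement is then immediate from the geometric convergence in Theorem \ref{fixedpoint}: starting from the Dirac mass at $x$, the phase distribution after $n$ firings converges to $\tilde q$ exponentially in $L^1([0,1])$; summing these distributions and translating back to the interval $[n,n+1]$, the instantaneous rate $q(x,\cdot)$ approaches $q=\tilde q/m$ in $L^1$ at a geometric rate. (Accordingly, the $\tilde q(t)$ on the right-hand side of the stated limit is naturally read as the outgoing density $q(t)=\tilde q(t)/m$.)

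The principal obstacle is the passage from the discrete ergodicity of $\{X_k\}$ to a continuous-time renewal-density limit: one must sum the densities $p_k(x,t)$ over $k$ in a manner uniform enough to apply the Key Renewal Theorem on each period $[n,n+1]$, and to justify interchanging the limits $x\to-\infty$ and $N\to\infty$. The cleanest resolution is Thorisson's coupling construction from \cite{thorisson84}, which also supplies the geometric rate used in the last paragraph.
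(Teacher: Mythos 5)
The paper offers no proof of Theorem \ref{constant}: it simply cites Theorem 3 of \cite{thorisson84}. Your outline therefore does more than the source. The part you fill in --- that $Tq=q$ together with the uniqueness clause of Theorem \ref{fixedpoint} forces $q=c\tilde q$, and that $c$ is identified as $1/m$ by matching the long-run time average of the renewal density against the stationary mean cycle length $m=\int_0^1 E\,\tilde q$ --- is correct in spirit, and your reading of the final display (with $\tilde q$ replaced by $q=\tilde q/m$) is the only one under which the stated limit can hold. Note that the paper's own ``Expectation values'' computation in Appendix 2 only yields $m=\int_0^1 qE\big/\int_0^1 q$ for an arbitrary periodic fixed point $q$, which does not determine the normalization; so your time-average step is a genuine addition, not a repetition.

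Two points need care, however. First, the claim that $q(x,t)\to q(t)$ as $x\to-\infty$ ``uniformly on $[0,1]$ by the exponential contraction in Theorem \ref{fixedpoint}'' is not justified: that theorem controls the discrete-time phase chain $t_k\bmod 1$, whereas $q(x,\cdot)=\sum_k p_k(x,\cdot)$ is a renewal density, and $L^1$-convergence of each (suitably normalized) $p_k(x,\cdot)$ does not control the sum on a window $[n,n+1]$, where the number of contributing terms grows with $n$. Second, and for the same reason, the concluding $L^1$ statement is not ``immediate'' from Theorem \ref{fixedpoint}; it is exactly the discrete-to-continuous passage that you yourself name as the principal obstacle in your final paragraph, and it is the actual content of Thorisson's Theorem 3 (proved there by coupling, and requiring the hypothesis $m<\infty$). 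Deferring that step to \cite{thorisson84} is legitimate --- it is precisely what the paper does --- but the earlier assertions of immediacy and uniformity should then be withdrawn, since otherwise the argument appears to prove more than it does.
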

This is proved in \cite{thorisson84}, Theorem 3.
As an example take the case of a neuron with an absolute refractory period $\varrho$ and a constant input density $s(t)= A$. Here,
$q(t)=1$ is a fixed point for $\tilde{T}$. The expected average delay is
\begin{eqnarray*}
\int_0^1 E(x) dx &=& \int_0^1 dx \int_{x+\rho}^\infty(t-x) Ae^{-A(t-(x+\rho))} dt\\
&=&\rho + A^{-1}
\end{eqnarray*}
The outgoing density is thus
$$q(t) = \frac{1}{\rho+A^{-1}}$$
in accordance with the result in Appendix 1.\\

\subsection{Constant stimuli with general refractory structure}

We consider an important special case which corresponds to a neuron with
spontaneous firing and a short refractory time. Typically the spontaneous
firing rate is at most 100~Hz while the refractory time is at least 0.7~ms.
We model the situation by taking a constant density.

In the special case $s(t)\equiv A$, $A>0$, the transition probability
\begin{equation*}
Ar(t-x) e^{-A\int_x^tr(u-x)du}
\end{equation*}
is a function of $t-x$ that will be written as $p(t-x)$.

The transition probabilities for the k-th firing are then obtained by convolution
\begin{eqnarray*}
p_k (t-x)&=& \int p_{k-1}(t-z)p(z-x)dz\\
&=&p_{k-1}*p(t-x) = p^{*k}(t-x)
\end{eqnarray*}
Provided the absolute refractory period is positive,
the instantaneous firing rate
$$
q(t-x)=\sum_{m=1}^{\infty}p_m(t-x)
$$
with system start at $x$ has only finitely many terms.
The instantaneous firing rate of the neuron is then given by
\begin{eqnarray*}
q&=& \lim_{x\rightarrow-\infty} \sum_{1} ^{\infty}p_k(t-x)\\
&=& \lim_{t\rightarrow \infty} \sum_{1} ^{\infty}p^{*k}(t)
\end{eqnarray*}
In this situation one actually need not invoke Thorisson's theory, as the  classical Blackwell renewal Theorem \cite{F2} verifies that the previous limit exists and is constant.

In the case of a pure absolute refractory period $\rho > 0$ we already noted that
\begin{equation*}
q=\frac{A}{1+A\rho}.
\end{equation*}
Another proof that uses just simple Fourier analysis in included in Appendix 1.

\section{The integrate--and--fire neuron: a simple discrete model}

In this section we derive the instantaneous firing rate and the interspike interval
distributions for the integrate--and--fire neuron. We analyze the example of constant
inputs and finally show that for every periodic stimulus there exists a unique
periodic equilibrium density which the neuron's instantaneous firing rate approaches.
This simple model does not allow the inclusion of a refractory period.

\subsection{Integrate-and-fire neuron}

An integrate-and-fire neuron (IF neuron),
$n_{IF}$ is defined through inhibitory and excitatory incoming spike trains.
The model described here is perfect in the sense that the effect of the incoming spikes does not
decay over time. Moreover, the model has bounded paths since the neuron is not allowed to
have infinitely big membrane potential values. This is done by modeling the membrane potential
in an abstract way as states in a finite system, where each state characterizes how many excitatory
incoming spikes are needed at least before the neuron can emit a spike.

More precisely,
let $s_-(t)$ and $s_+(t)$
be the intensities of two independent, inhomogeneous Poisson processes.
The IF neuron has $K>1$ possible states
$1,\dots,{K}$. When a spike arrives from the inhibitory process,
the IF neuron moves from state $i$ to state $i+1$
unless already at the lowest state $K$.
Similarly, when a spike arrives from the excitatory process, the neuron moves
up from the state $i$ to $i-1$, if $i>1$ and
moves from state $1$ to $K$ otherwise. This transition is called resetting
and the IF neuron emits a spike during the transition.
These transition times form a point process $T_{IF}=\{t_0,t_1,\dots\}$. 

\subsection{Densities}\label{subse:densities}
We assume that the activity of the neuron is determined
by the incoming densities $s_i(t) \geq 0$
and a set of parameters: spontaneous activity of the neuron $\sigma$, connection strength $\omega _i$,connection delay $\tau_i$ and firing threshold $\vartheta$.
We divide the analysis of the neuron into two independent steps:
input analysis and output generation. The first results in a single pooled activity,
a Poisson process that we approximate by the input density $s$. The output generation depends on $s$ as well as on $\sigma$ and $\vartheta$. The spontaneous activity $\sigma$ can be subsumed in the calculation of $s(t)$.\\
The incoming densities $s_i, i\in I$.
represent the activities of the neurons that connect over the dendritic tree.
The excitatory and inhibitory inputs are pooled separately
by taking the sum of the densities $s_i$, weighted with the strength $\omega_i$ of the connections,
and taking into account the delays $\tau _i$ of each path,
\begin{equation}
s^+(t)=\sum_{i\in I_+} \omega_i s_i (t-\tau_i),
\end{equation}
\begin{equation}
s^-(t)=\sum_{i\in I_-} \omega_i s_i (t-\tau_i),
\end{equation}
where $I_+ \subset I$ and $I_-\subset I$ are the subsets of indices for excitatory and inhibitory connections respectively. The probability of an incoming spike is modeled by the density $s(t)= s^+(t) + s^-(t)$.\\

\subsection{The infinitesimal generator}
A probability vector $v = (v_1,v_2,..., v_n)$ is a vector with non-negative components that add up to $1$: $v_i\geq 0$ for $i=1,...n$ and $\sum_1^n v_i = 0$. A linear mapping that maps probability vectors into probability vectors is called a probability mapping. Expressed by a matrix $A = (a_{ij})$ in standard coordinates it is characterized by the condition that its column vectors are all probability vectors. If the linear differential equation
$$v'=Qv$$
generates a flow of probability mappings, then $Q$ is called an infinitesimal generator for probability mappings. It is well-known that the infinitesimal generators $Q$ can be characterized by the property that they can be written in the form $Q = (A - I)s$, with $A$ a probability matrix and $s\geq 0$  is a scalar. In other words, the diagonal-elements of $Q$ are non-positive, other elements non-negative, and each column sums to zero. Note that the flow of the differential equation maps the positive cone $\mathbf{R^n}_+ = \{v: v_i\geq 0, i=1,...n\}$ into itself. In fact, on the boundary of the cone, the vector field $Qv$ points into the cone or is tangent to the cone.\\

The infinitesimal generator of the process described in subsection \ref{subse:densities} is
given by the matrix
$Q(t) = (A(t) - I) s(t)$ with a matrix $A(t)$ of the form
\begin{equation*}
A(t) =
\begin{bmatrix}
0 & p(t)  &   & & &  \\
q(t)  & 0 & p(t) & & &  \\
  &      \ddots  & \ddots  & \ddots& \\
  &      &       q(t)&    0& p(t) \\
p(t) &      &            &  q(t) & q(t) \\
\end{bmatrix},
\end{equation*}
with $p(t)= \frac{s_+(t)}{s(t)}$ and $q(t)=\frac{s_-(t)}{s(t)}$.

The time development of the system is fully determined by the
matrix differential equation
$$
v'(t) = Q(t)v(t),
$$
where $v$ is a time-dependent probability vector.

A first order linear matrix equation has always a unique solution with prescribed initial condition $v(x)= w$. It can be expressed by the Peano-Baker series
in terms of iterated integrals, see \cite{rindos}, \cite{fortmann}, \cite{kailath}.

$$
v(t) = \Phi(t,x)w
$$
where
$$
\Phi(x,t)=I+
\int_x^tQ(\tau_1)d\tau_1+
\int_x^t\int_x^{\tau_1}Q(\tau_1)Q(\tau_2)d\tau_2d\tau_1+\dots
$$

\subsection{The periodic case}\label{periodic}

\begin{theorem}
Assume that $A(t)$ is a 1-periodic  probability matrix that is piecewise continuous in $t$, and that $s(t)$ a 1-periodic function.
Then the equation
\begin{equation}
v^{\cdot} = (A(t)-I) s(t) v
\end{equation}
has a 1-periodic solution.\\
\end{theorem}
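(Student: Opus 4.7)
The plan is to reduce the question to a fixed-point problem for the time-one map (monodromy operator) and then exploit the fact that the flow preserves the probability simplex.

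First, I would fix the initial time $t=0$ and consider the fundamental solution $\Phi(t,0)$ produced by the Peano--Baker series discussed above. For any $w \in \R^n$, the solution with initial condition $v(0)=w$ is $v(t)=\Phi(t,0)w$, so a $1$-periodic solution exists if and only if the linear monodromy operator $M := \Phi(1,0)$ has a nonzero fixed point. Indeed, if $Mw_0=w_0$, then by the $1$-periodicity of $Q(t)=(A(t)-I)s(t)$ one has $\Phi(t+1,1)=\Phi(t,0)$, whence
\[
v(t+1)=\Phi(t+1,0)w_0 = \Phi(t+1,1)\,\Phi(1,0)\,w_0=\Phi(t,0)\,(Mw_0)=\Phi(t,0)w_0=v(t),
\]
and the resulting function extends to a $1$-periodic solution on all of $\R$.

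Second, I would show that $M$ maps the standard probability simplex $\Delta=\{v\in\R^n_+ : \sum_i v_i=1\}$ into itself. This is exactly the point discussed in the preceding subsection: since $Q(t)$ is an infinitesimal generator for probability mappings for every $t$ (its columns sum to zero and its off-diagonal entries are non-negative), the flow $\Phi(t,0)$ sends probability vectors to probability vectors. One way to see this cleanly is to verify preservation of the two defining properties separately. For preservation of the affine hyperplane $\{\sum v_i=1\}$, let $e=(1,\dots,1)^\top$ and note that $e^\top Q(t)=0$ for every $t$, hence $\frac{d}{dt} e^\top v(t) = 0$. For invariance of the cone $\R^n_+$, the vector field $Q(t)v$ points inward (or is tangent) on each face $\{v_i=0\}$ because the $i$-th row of $Q$ has non-negative off-diagonal entries; this classical positivity/tangency criterion shows that the cone is forward-invariant. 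Together, these give $M(\Delta)\subseteq \Delta$.

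Third, since $\Delta$ is a nonempty, compact, convex subset of $\R^n$ and $M$ is continuous (in fact linear), Brouwer's fixed-point theorem yields a $w_0\in\Delta$ with $Mw_0=w_0$. Equivalently, one can invoke Perron--Frobenius: $M$ is a column-stochastic matrix, so $e^\top M = e^\top$, which forces $1$ to be an eigenvalue and produces a probability eigenvector. Substituting this $w_0$ into the construction of the first paragraph yields the desired $1$-periodic solution.

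The main obstacle is not the fixed-point argument, which is standard once set up, but the verification that the flow genuinely preserves the probability simplex when $A(t)$ and $s(t)$ are merely piecewise continuous. A careful approach is to approximate $Q(t)$ by piecewise constant generators (on which the flow is a product of matrix exponentials $e^{Q_k \Delta t_k}$ of genuine generators, hence column-stochastic), observe that these approximating time-one maps are all column-stochastic, and then pass to the limit using the continuous dependence of the Peano--Baker series on the coefficients. Since the set of column-stochastic matrices is closed, the limit $M$ is itself column-stochastic, and the proof concludes as above.
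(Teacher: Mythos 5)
Your proposal is correct, but it takes a different route from the paper. The paper's proof is a two-line appeal to Floquet theory: since every column of $Q(t)=(A(t)-I)s(t)$ sums to zero, the constant vector $(1,\dots,1)$ is a periodic solution of the adjoint equation $-y'=Q(t)^*y$, and the duality theorem cited from Knobloch--Kappel (the space of periodic solutions of a periodic linear system has the same dimension as that of its adjoint) then forces the original system to have a nontrivial periodic solution. Your argument replaces this black-box duality with a direct analysis of the monodromy operator $M=\Phi(1,0)$: the identity $e^\top Q(t)=0$ gives $e^\top M=e^\top$, the quasi-positivity of $Q$ gives forward invariance of the positive cone, and Brouwer (or Perron--Frobenius for column-stochastic matrices) produces a fixed probability vector. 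Note that your Perron--Frobenius remark ($e^\top M=e^\top$ implies $1\in\sigma(M)$) is essentially the same linear algebra that underlies the paper's duality theorem, so the two proofs share their core mechanism; what your version buys is (i) a self-contained argument that avoids citing the Floquet duality, (ii) the strictly stronger conclusion that the periodic solution can be taken to be a probability distribution, which is what the paper actually needs when it later normalizes the solution and interprets $s_+(t)v_1(t)$ as an output density, and (iii) an explicit treatment of the piecewise-continuity issue via approximation by piecewise-constant generators, which the paper glosses over. The paper's version is shorter and does not use positivity at all, but correspondingly only yields some nonzero periodic solution; positivity of that solution is supplied elsewhere (the cone-invariance remark in the preceding subsection and the uniqueness argument of Appendix 2).
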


This is a consequence of Floquet theory:\\
For all $t$, the adjoint matrix $A(t) ^*$ has eigenvector $(1,1,...1)$ with eigenvalue $1$. Therefore the adjoint equation
\begin{equation}
-y^{\cdot} = (A(t)-I)^*s(t) y
\end{equation}
has $(1,1,...1)$ as a constant and hence periodic solution.
Following \cite{knobloch} p. 94, the space $L$ of periodic solutions of $(1)$ has the same dimension as the space $L^*$ of periodic solutions of (2). Therefore the space $L$ has dimension at least one.

\begin{theorem}\label{unique}
In this situation, the (normalized) periodic solution is unique.
\end{theorem}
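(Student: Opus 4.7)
The plan is to reduce uniqueness to a Perron--Frobenius statement about the monodromy matrix $M := \Phi(0,1)$. First I would observe that since the columns of $A(t) - I$ sum to zero for each $t$, the row vector $(1,1,\dots,1)$ is a left null vector of every $Q(t)$ and therefore a left-eigenvector of $M$ for the eigenvalue $1$; combined with the preservation of the positive cone noted in Subsection 5.3, this shows that $M$ maps the probability simplex $\Sigma_K = \{v \in \mathbb{R}^K : v_i \geq 0,\ \sum_i v_i = 1\}$ into itself. A normalized $1$-periodic solution $v(\cdot)$ is then determined by an initial vector $v(0) \in \Sigma_K$ satisfying $Mv(0) = v(0)$, so the problem reduces to showing that $M$ has a unique fixed probability vector.

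To establish this I would prove that $M$ is primitive, i.e.\ has strictly positive entries. Probabilistically, $M_{ji}$ equals the probability that the continuous-time Markov chain with generator $Q(\cdot)$, started at state $i$ at time $0$, is at state $j$ at time $1$. Using the ``reset'' transition $1 \to K$ built into $A(t)$, one can realize any transition via the cycle $i \to i-1 \to \dots \to 1 \to K \to K-1 \to \dots \to j$ consisting only of excitatory events; truncating the Peano--Baker expansion to finitely many jumps and integrating over their occurrence times yields a strictly positive contribution to $M_{ji}$, provided $s_+(t)$ is nontrivial on a subset of $[0,1]$ of positive measure. Hence $M_{ji} > 0$ for all $i,j$.

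Perron--Frobenius for positive column-stochastic matrices then gives that $1$ is a simple eigenvalue of $M$ with a strictly positive eigenvector, so the fixed probability vector of $M$ is unique, and consequently the normalized periodic solution is unique as well. The hard step is the primitivity claim, since it requires unwinding the Peano--Baker series along a favorable event path and depends on an implicit nondegeneracy of $s$ (uniqueness genuinely fails when $s \equiv 0$, where every probability vector is stationary). An alternative would be a contraction argument in the Hilbert projective metric on $\Sigma_K$, or invoking the Floquet duality $\dim L = \dim L^*$ from \cite{knobloch} together with Perron--Frobenius for $M^*$; but the direct Perron--Frobenius route on $M$ feels the most transparent in this finite-dimensional setting.
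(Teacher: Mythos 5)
Your proposal is correct, and it shares the same skeleton as the paper's argument -- both reduce uniqueness to a positivity property of the monodromy matrix $\Phi(1,0)$ -- but the machinery differs at both ends. The paper (Appendix 2, Theorem \ref{lemma:floquet}) proves strict positivity of $\Phi(t_1,t_0)$ by iterating explicit differential inequalities $x_k'\geq c\,x_{k-1}-Cx_k$ along the cyclic structure of $Q$, under the hypothesis that the excitatory rates are bounded below by $c>0$ on some interval $(t_0,t_1)$; it then concludes not via Perron--Frobenius but via the Dobrushin ergodicity coefficient (Theorems \ref{lemma:kernel} and \ref{lemma:dobrushin}), i.e.\ a Banach contraction on the probability simplex in total variation norm. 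That route buys, with no extra work, the exponential convergence of arbitrary solutions to the periodic one, which the paper needs later (e.g.\ for the expectation-value computations); your Perron--Frobenius argument gives algebraic simplicity of the eigenvalue $1$ and hence uniqueness, but to recover the convergence rate you would still have to argue a spectral gap or fall back on a contraction estimate. Your primitivity step via the probabilistic reading of the Peano--Baker series (lower-bounding $M_{ji}$ by the probability of one explicit excitatory jump path through the reset $1\to K$) is sound and essentially equivalent to the paper's differential-inequality bound, though stated more loosely; note that your nondegeneracy hypothesis ($s_+$ nontrivial on a set of positive measure) is slightly weaker than the paper's (bounded below on an interval), and you are right that some such hypothesis is genuinely needed even though the statement of Theorem \ref{unique} leaves it implicit in the phrase ``in this situation.''
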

The proof is given in Appendix 2.\\

If $v=(v_1,v_2,...,v_n)$ is the solution of the equation, then
the output density of the neuron is given by the density
$$
p(t) v_1(t)s(t)= s_+(t) v_1(t)
$$
that controls the passage from the first  to the n-th level of the neuron.

\subsection{Interspike interval distribution}

Given a vector $w$ representing the distribution of probability over the states at
the initial time $x$,
we determine the transition probability $p$ given
the time-dependent Poisson processes $N_+$ and $N_-$ as input.
By adding an absorbing state in the diagram
\ref{state_diagram} we find the
infinitesimal generator matrix
\begin{equation*}
Q_{0}(t) =
\begin{bmatrix}
0 & {s}_+(t) &  & & & & \\
0 &-s (t) & {s}_+(t)  &   & & &  \\
0 &{s}_-(t)  & -s (t) & {s}_+(t) & & &  \\
\vdots &  &      \ddots  & \ddots  & \ddots& \\
0 &  &      &           {s}_-(t)& -s (t)& {s}_+(t) \\
0 &  &      &                         & {s}_-(t) & -{s}_+(t) \\
\end{bmatrix},
\end{equation*}
where $s (t)=s_+(t)+s_- (t)$.
The time development of the finite state system up to the next firing is the solution
to the differential equation
\begin{equation}
\label{gen}
v'(t) = Q_{0}(t)v(t)
\end{equation}
The solution $v(t)=(v_0(t),v_1(t),...v_n(t))$ of the differential equation with initial condition $v(x) = e_n$ is the probability distribution over the states, given that the neuron fired at time $t_0=x$ (at this time the neuron is at state $n$). The component $v_0(t)$ is the probability that the next firing occurred in the interval $(x,t]$. The transition probability is therefore given by
\begin{equation}\label{trans}
p(x,t)=\frac{d}{dt}v_0(t) = s_+(t)v_1(t)
\end{equation}
The output density $s^*$ (that depends on $x$ and on $t$) of the neuron can then be calculated as the instantaneous firing rate
$$s^*(x,t)= \sum_1^{\infty} p_m(x,t)$$
(cf. formula (10)).\\

\subsection{Refractory period for the integrate and fire neuron}

The model for the integrate and fire neuron can also be considered as a renewal process with
time dependent input. The process depends on the last firing time $x$ and on the excitatory and inhibitory
inputs densities $s_+$ and $s_-$.
As in section 4.4 it is then possible to incorporate a refractory period in the model. 
The refractory function $r(t-x)$, translated to the position
$x$ has to be multiplied
with the input densities. The differential equation $v'(t) = Q(t) v(t)$
is then solved with
$s_+(t)$ replaced by $s_+(t)r(t-x)$, $s_-(t)$ by $s_-(t)r(t-x)$ and consequently $s(t)$ by $s(t)r(t-x)$.
The transition probability is again given by (\ref{trans}), with $v_1(t)$  the component of the
solution vector of the modified differential equation.

\subsection{Constant stimuli}

In general, the explicit solution of the differential equation is difficult to find. This can  already be seen by considering the simple
case, where the system is time-independent, i.e. the excitation and the inhibition are homogeneous
Poisson processes. Then the Peano-Baker series simplifies to matrix exponentiation. However,
even this cannot be calculated explicitly for large matrices.

\subsubsection*{Two state system}
The simplest case is the two-state system which fires at the moment $t=0$.
There,
the inter spike interval density can be calculated (using Mathematica)
directly and we have
$$
f(t) =
\frac{s _+^2}{\sqrt{s_-(4s_++s_-)}}
\left({\rm e}^{t\sqrt{s_-(4s_++s_-)}}-1\right)
{\rm e}^{-\frac{t}{2}\left(2s_++s_-+\sqrt{s_-(4s_++s_-)}\right)}.
$$

Similar calculations reveal that the instantaneous firing rate
with initial state $w=(0,0,\dots,0,1)$ at moment $t=0$
gives
$$
I(t)=\frac{s_+^2}{2s_++s_-}\left(1-{\rm e}^{-(2s_++s_-)t}\right).
$$
For larger matrices the formulas are similar but become enormous. The important phenomenon is
that the instantaneous firing rate always approaches at an exponential rate a unique equilibrium.

\subsubsection*{Many state system with excitation only}

If inhibition is absent,
then the density of the first passage time of the integrate--and--fire neuron is
$$
f(t) = s_+(\exp(Q_{0}t)w)_1  = \frac{ \exp(-s_+t)(s_+t)^{K}}{{K}!t},
$$
with the choice $w = (0,0,\dots, 0, 1)$.
We find thus
the well known formula for perfect integrate-and-fire neurons with excitation only.
\subsubsection*{Equilibrium in the presence of inhibition}
In the equilibrium, the $K$-state system satisfies $Qv = 0$. This leads to
$$
E(t)=\frac{\sum_{k=1}^{K}(K-k+1)s_+^{K-k}s_-^{k-1}}{s_+^{K}}
$$
For $s_+>>s_-$ the excitatory term dominates and the expectation is roughly $K/s_+$;
for $s_+=s_-$ the expectation is
$$
E(t)= \frac{K(1+K)}{2s_+};
$$
for $s_+<<s_-$ the expectation blows up.

\section{Proofs for the theorems}
\section*{Appendix 1}
\begin{theorem}
Let $X$ be a Poisson process with a refractory time $\rho>0$ and constant
density $s(t)=A>0$. Then the instantaneous firing rate is well defined and
$$
q(t)=\frac{A}{1+A\rho}.
$$
\end{theorem}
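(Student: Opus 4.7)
The plan is to exploit the translation invariance arising when the input $s\equiv A$ is constant, reducing the problem to the asymptotics of a single convolution-power series, and then to identify the limit via Fourier analysis.

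First I would observe that with $r(u)=\chi_{[\rho,\infty)}(u)$ and $s\equiv A$, formula (\ref{pxt}) reduces to $p(x,t)=p(t-x)$, where
\[
p(u)=A e^{-A(u-\rho)}\chi_{[\rho,\infty)}(u)
\]
is the density of $\rho+E$ with $E\sim\mathrm{Exp}(A)$, a probability density of mean $\mu=\rho+1/A$. The iterated transition kernels become convolutions, $p_m(x,t)=p^{*m}(t-x)$, so writing $U(u):=\sum_{m\geq 1}p^{*m}(u)$ we have $q(x,t)=U(t-x)$. Since $p^{*m}$ is supported in $[m\rho,\infty)$, the series defining $U(u)$ is locally finite, hence $U$ is a well-defined locally bounded function on $[0,\infty)$, and the theorem reduces to showing $U(u)\to A/(1+A\rho)$ as $u\to\infty$.

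Next I would pass to Fourier space. A direct computation gives $\hat p(\xi)=Ae^{-i\xi\rho}/(A+i\xi)$, and summing the geometric series (initially on suitable test functions, treating $\hat U$ as a tempered distribution) yields
\[
\hat U(\xi)=\frac{\hat p(\xi)}{1-\hat p(\xi)}=\frac{Ae^{-i\xi\rho}}{A+i\xi-Ae^{-i\xi\rho}}.
\]
Since $|\hat p(\xi)|=A/\sqrt{A^2+\xi^2}<1$ for all real $\xi\neq 0$, the denominator has no real zero other than $\xi=0$, at which the Taylor expansion reads $A+i\xi-Ae^{-i\xi\rho}=i\xi(1+A\rho)+O(\xi^2)$. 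Consequently $\hat U$ has a simple pole at $\xi=0$ with residue $A/(1+A\rho)$, and I would split
\[
\hat U(\xi)=\frac{A/(1+A\rho)}{i\xi}+R(\xi),
\]
where $R$ is continuous at $0$ and, from the explicit form of $\hat p$, satisfies $R(\xi)=O(1/|\xi|)$ at infinity.

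Finally I would invert. The singular term $1/(i\xi)$ corresponds, up to normalisation, to the Heaviside function, contributing the constant $A/(1+A\rho)$ on $(0,\infty)$; the remainder $R$ is regular enough that its inverse Fourier transform decays at infinity by a Riemann--Lebesgue type argument, yielding $U(u)\to A/(1+A\rho)$. The main obstacle is making the Fourier inversion rigorous near the singularity at $\xi=0$, since $\hat U$ is only a tempered distribution. I would circumvent this by working instead with the Laplace transform $\tilde U(\lambda)=Ae^{-\lambda\rho}/(A+\lambda-Ae^{-\lambda\rho})$, whose sole singularity in the closed right half-plane near the origin is a simple pole at $\lambda=0$ with residue $A/(1+A\rho)$; a standard Tauberian theorem or a contour-shift argument (justified by the absence of zeros of the denominator on the imaginary axis away from $\lambda=0$) then delivers $U(u)\to A/(1+A\rho)$.
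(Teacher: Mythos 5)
Your reduction to the renewal density $U(u)=\sum_{m\ge1}p^{*m}(u)$ and the transform identity $\hat U(\xi)=Ae^{-i\xi\rho}/(A+i\xi-Ae^{-i\xi\rho})$ coincide exactly with the paper's Appendix~1: same single-step density $p(u)=Ae^{-A(u-\rho)}\chi_{[\rho,\infty)}(u)$, same geometric summation off the origin, same identification of the simple pole at $\xi=0$ carrying the constant $A/(1+A\rho)$. The divergence is in the last step, which is where the actual work lies, and there your proposal has two concrete gaps. First, in the Fourier route your remainder $R(\xi)$ is only $O(1/|\xi|)$ at infinity, which is not integrable, so Riemann--Lebesgue does not apply to it as stated. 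The paper avoids this by a two-term subtraction: it peels off the explicit rational function $\hat R(\omega)=A(1+i\omega)/\bigl(i\omega(1+A\rho+i\omega)\bigr)$, inverts it by hand, and is left with a remainder $\hat S(\omega)=O(\omega^{-2})$ to which Riemann--Lebesgue does apply. Second, the ambiguity you rightly flag (the inverse transform of $1/(i\xi)$ is determined only up to an additive constant, i.e.\ Heaviside versus $\tfrac12\mathrm{sgn}$, plus an arbitrary distribution supported at the origin) is resolved in the paper by a support argument: a distribution supported at the origin inverts to a polynomial, and since $q$ vanishes on $(-\infty,\rho)$ while the $\tfrac12\mathrm{sgn}$ term equals $-\frac{A}{2(1+A\rho)}$ there, that polynomial must be the constant $\frac{A}{2(1+A\rho)}$, which then doubles the limit at $+\infty$ to $\frac{A}{1+A\rho}$.

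Your proposed substitute for this step --- a ``standard Tauberian theorem'' applied to $\tilde U(\lambda)$ --- does not deliver the claimed conclusion: real (Karamata/Hardy--Littlewood) Tauberian theorems applied to the simple pole of $\tilde U$ at $\lambda=0$ yield only the Ces\`aro statement $T^{-1}\int_0^T U\to A/(1+A\rho)$ (the elementary renewal theorem), not the pointwise limit $U(u)\to A/(1+A\rho)$, since $U$ satisfies no monotonicity-type Tauberian condition. The contour-shift variant can be made to work (the zeros of $A+\lambda-Ae^{-\lambda\rho}$ other than $\lambda=0$ do stay uniformly to the left of the imaginary axis), but it runs into the same non-integrability of $\tilde U$ along vertical lines and so again requires a second subtraction --- essentially the paper's decomposition in Laplace clothing. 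Alternatively, since $p$ is spread out with mean $\rho+1/A$, Blackwell's renewal theorem gives $U(u)\to A/(1+A\rho)$ outright, as the paper itself remarks at the end of Section~4; but invoking it makes the entire transform computation redundant. As written, your argument correctly locates the difficulty but does not close it.
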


The Fourier transform of
\begin{equation*}
p(t)=A \chi_{\{ t>\rho\}} e^{-A(t-\rho)}
\end{equation*}
is
\begin{eqnarray*}
\hat{p}(\omega)&=&A \int_{\rho}^{\infty} e^{-i\omega t} e^{-A(t-\rho)}dt =
 A e^{-i\omega \rho}\int_{0}^{\infty}e^{-(A+i\omega)t}dt\\
&=&\frac{A e^{-i\omega \rho}}{A+i\omega}
\end{eqnarray*}
By the convolution theorem

\begin{equation*}
\hat{p}_k(\omega):=\hat{p^{*k}}(\omega)=
A^k\frac{\e ^{-ik\omega\rho}}{(A+i\omega)^k}
\end{equation*}
Since $p_k(x)\geq 0 $ has support on $[k\rho,\infty )$ and $\int p_k=1$, we deduce that \\ $\int_{-\infty}^{\infty}\big|q(t)| (1+|t|)^{-2}dt <\infty, $
where $q$ is the locally integrable function
$$
q(t):=\sum_{k=1}^{\infty}p_k(t),
$$
and its Fourier transform satisfies
$$
\widehat q(\omega):=\sum_{k=1}^{\infty}\hat{p}_k(\omega)\qquad \textrm{convergence in}\; \mathcal{S}'.
$$
We may perform the summation
 in terms of geometric series  for $\omega \not=0$ (with locally uniform convergence outside the origin), and obtain
$$
\widehat q(\omega) = \frac{\e ^{-i\omega \rho}}{1+\frac{i\omega}{A}-\e ^{-i\omega \rho}}, \qquad \omega\not=0.
$$

The function  $\omega\mapsto \frac{\e ^{-i\omega \rho}}{1+\frac{i\omega}{A}-\e ^{-i\omega \rho}}$ is not integrable at the origin and hence the classical definition of
the Fourier transform does not apply. However, it certainly defines a distribution in the principal value sense
and below when dealing with this function as a distribution we assume that it is defined as a principal value sense.
Since the Fourier transform is unique, $\hat{q}$ will then have the form
$$
\hat{q}=\hat{d} + \{ p.v.\}  \frac{\e ^{-i\omega \rho}}{1+\frac{i\omega}{A}-\e ^{-i\omega \rho}},
$$
where $\hat{d}$ is a distribution supported at origin. Our aim is to show that $d$ is a constant.

Consider the behaviour of $\hat{q}-\hat{d}$.
When $\omega$ is close to zero we have

$$
\hat{q}(\omega)-\hat{d}(\omega)
=
\frac{A}{1+A\rho}\frac{1}{i\omega}+O(1).
$$

Since the term $\e ^{-i\omega (\rho)}$ in the numerator corresponds to a shift
we neglect it for the moment. For big values of $\omega$
$$
\frac{1}{1+\frac{i\omega}{A}-\e ^{-i\omega \rho}}
=
\frac{A}{i\omega+O(1)}.
$$
Let us write
$$
\frac{1}{1+\frac{i\omega}{A}-\e ^{-i\omega \rho}}
=
\hat{R}(\omega) + \hat{S}(\omega)
$$
with
$$
\hat{R}(\omega)=\frac{A(1+i\omega)}{i\omega(1+A\rho+i\omega)}.
$$
For small values of $\omega$ we see that
$$
\hat{S}(\omega)
=
\left(\frac{A}{1+A\rho}\right)^2\frac{\rho ^2-2\rho}{2}+O(\omega)
$$
and for large values of $\omega$ we have
$$
\hat{S}(\omega)
=
\frac{A^2(1-\rho-\e^{-i\omega\rho})}{\omega ^2+O(\omega)}.
$$
Hence $\hat{S}$ is integrable and by the Riemann-Lebesgue lemma the inverse
Fourier transform is uniformly continuous and vanishes at infinity. Since
$$
\hat{R}(\omega)
=
\frac{A}{1+A\rho}\frac{1}{i\omega}+\frac{A^2\rho}{1+A\rho}\frac{1}{1+A\rho+i\omega}
$$
we can compute the inverse Fourier transform (recall that we need to interpret the distributions at the origin in the principal value sense)
as
$$
R(t)
=
\frac{A^2\rho}{1+A\rho}\e ^{-(1+A\rho)t}\chi_{(0,\infty)}(t)
+
\frac{A}{2(1+A\rho)}{\rm sgn}(t).
$$
But  we know that $q(t)$ is zero for $t<\rho$. Also, every distribution $\hat{d}$
supported at the origin has an inverse Fourier transform, which is a (complex)
polynomial. Since
$$
q(t) =
d(t) + R(t-\rho) + S(t-\rho)
$$
we conclude that
$d$ must be constant. Hence
$$
d=\frac{A}{2(1+A\rho)}
$$
and
$$
q= \lim_{t\rightarrow -\infty}q(t)=\frac{A}{1+A\rho}
$$
This proves the theorem. For the Fourier analysis results we refer to \cite{S}.

\section*{Appendix 2}
In this appendix we give a full proof of theorem \ref{unique}. The proof is a fixed point
argument. We show that the statement of the theorem is equivalent to finding a unique fixed point
of an integral operator. This operator then is shown to be positive and by examining the dual
space, we conclude that the fixed point is attractive in the sense that any initial value
will converge to the same fixed point geometrically fast.
Here we recall some classical facts on uniformly ergodic Markov chains.
Let $(S,\B)$ be a measurable space and denote by $\M (S)$ the Banach space of all finite measures on $(S,\B)$, equipped with the total variation norm, denoted by $\|\cdot \|_1$. The subset of probability measures is denoted by $\Prm (S).$  Let us call any linear and bounded  operator $T:\M (S)\to \M(S)$ that maps positive measures to positive measures in an isometric way a {\sl transition operator}. Especially, transition operators map probability measures to probability measures. The Dobrushin coefficient of ergodicity is defined by the formula
\begin{equation}\label{eq:dobrushin}
\delta (T):=\sup\frac{\|T\mu_1-T\mu_2\|_1}{\|\mu_1-\mu_2\|_1},
\end{equation}
where the supremum is taken over all distinct $\mu_1,\mu_2\in \Prm (S).$
It is straightforward to check  from the definition that for products of transition operators one has
\begin{equation}\label{eq:dobrushinproduct}
\delta (STU)\leq \delta (T).
\end{equation}
\begin{theorem}\label{lemma:dobrushin} Assume that the transition  operator $T:\M (S)\to \M (S)$ satisfies $\delta (T) <1$. Then there is a unique equilibrium distribution $\pi\in \Prm (S) $ with $T\pi =\pi .$ Moreover, one has the exponential convergence
$$
\|\pi-T^n\mu\|_1\leq 2(\delta(T))^n,\quad {\rm for \; all }\;\; n\geq 1\quad
{\rm and}\quad \mu\in \Prm (S).
$$
\end{theorem}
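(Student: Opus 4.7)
The plan is to read the statement as a direct instance of the Banach contraction mapping principle, applied to the map $T$ acting on the metric space $(\Prm(S), d)$ with $d(\mu_1,\mu_2)=\|\mu_1-\mu_2\|_1$. Since the Dobrushin coefficient $\delta(T)$ defined in \eqref{eq:dobrushin} is precisely the Lipschitz constant of $T$ on the affine subset $\Prm(S)$, the hypothesis $\delta(T)<1$ is exactly the strict contraction condition needed.

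First I would verify that $(\Prm(S),d)$ is a complete metric space. The ambient $(\M(S),\|\cdot\|_1)$ is a Banach space, so it suffices to check that $\Prm(S)$ is closed: if $\mu_n\to\mu$ in total variation, then $|\mu_n(A)-\mu(A)|\le \|\mu_n-\mu\|_1\to 0$ for every $A\in\B$, so both nonnegativity and the unit-mass condition pass to the limit. Because $T$ is a transition operator it stabilizes $\Prm(S)$, so Banach's theorem immediately produces a unique fixed point $\pi\in\Prm(S)$ with $T\pi=\pi$. For the quantitative convergence I would iterate the contraction estimate: since each $T^k\mu$ and each $T^k\pi=\pi$ lies in $\Prm(S)$, applying the defining inequality of $\delta(T)$ a total of $n$ times yields
$$\|\pi-T^n\mu\|_1 \;=\; \|T^n\pi-T^n\mu\|_1 \;\le\; \delta(T)^n\,\|\pi-\mu\|_1,$$
and the factor $2$ in the theorem then comes from the trivial bound $\|\pi-\mu\|_1\le \|\pi\|_1+\|\mu\|_1=2$, valid for any pair of probability measures.

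There is no serious obstacle in this argument; the content is essentially a re-interpretation of Banach's theorem on the right subspace. The single subtlety worth flagging is that $\delta(T)$ is defined only through pairs of probability measures, so the contraction estimate is not an operator-norm inequality on all of $\M(S)$ but only on the invariant subset $\Prm(S)$. This is adequate because the entire iteration takes place inside $\Prm(S)$, and in fact the same reasoning produces the submultiplicativity $\delta(T^n)\le \delta(T)^n$, which strengthens the product rule \eqref{eq:dobrushinproduct} that the paper records and is what drives the exponential rate.
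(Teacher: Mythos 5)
Your argument is correct and is essentially the same as the paper's: both invoke Banach's fixed point theorem on the closed (hence complete) subset $\Prm(S)\subset\M(S)$ using $\delta(T)<1$ as the contraction constant, and both obtain the quantitative bound by iterating the defining inequality of $\delta(T)$ together with $\|\pi-\mu\|_1\le 2$. The only difference is that you spell out the closedness of $\Prm(S)$ and the origin of the factor $2$, which the paper leaves implicit.
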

\begin{proof}
The existence and uniqueness of $\pi$ follows by Banach's fixed point theorem since the assumption $\delta (T)<1$ states that $T$ is a strict contraction on the closed subset $\Prm (S)\subset \M (S).$ Also, we may estimate for $n\geq 1$
$$
\|\pi-T^n\mu\|_1=\|T\pi-T(T^{n-1})\mu\|_1\leq \delta(T) \|\pi-T^{n-1}\mu\|_1,
$$
and the second statement follows by iteration.
\end{proof}
\subsection*{The refractory neuron}
There is a particular class of transition operators that can be expressed as integral operators and for which the above Lemma applies directly.  In particular, this is the case for the operator in Theorem \ref{fixedpoint} in our setup. Namely, assume that  the space $(S,\B)$
admits a reference probability measure $m\in \Prm (S)$ and the transition  operator $T:\M (S)\to \M (S)$
acts via
$$
T\mu (A)=\int_{A\times S}k(x,y)m (dx) \mu(dy)\qquad {\rm for \; any}\;\; \mu\in \Prm (S),
$$
where the integral kernel $k:S\times S\to [0,\infty)$ is measurable and non-negative.
\begin{theorem}\label{lemma:kernel}
Assume that the  kernel $k$ of the transition operator $T$  satisfies
$$
k(x,y)\geq g(x) \qquad  {\rm for \; all}\;\; x,y\in S,
$$
where $g(x)\geq 0.$ Then $$\delta (T)\leq 1-\int_Sg(x)m(dx).$$
Especially, if the function $g$ is not  zero a.e., then $\delta (T)<1$ and the conclusions of Lemma \ref{lemma:dobrushin} apply.
\end{theorem}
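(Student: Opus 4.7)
\medskip

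\noindent\textbf{Proof proposal.} The plan is a direct minorization/coupling argument that controls the Dobrushin coefficient $\delta(T)$ by working with the Jordan decomposition of a difference of probability measures. Fix distinct $\mu_1,\mu_2\in \Prm(S)$ and set $\nu:=\mu_1-\mu_2$, so that $\nu(S)=0$ and $\nu=\nu^+-\nu^-$ with $\nu^+,\nu^-$ mutually singular, non-negative, and $\nu^+(S)=\nu^-(S)=\tfrac12\|\nu\|_1$. I would first record that, since $T$ is a transition operator (isometric on positive measures and mass-preserving on $\Prm(S)$), the kernel must satisfy the normalization $\int_S k(x,y)\,m(dx)=1$ for $\mu_i$-a.e.\ $y$; this is what allows Fubini to reduce every mass integral back to the total variation.

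Next, observe that $T\nu$ is absolutely continuous with respect to $m$ with density $f-h$, where
$$
f(x):=\int_S k(x,y)\,\nu^+(dy),\qquad h(x):=\int_S k(x,y)\,\nu^-(dy).
$$
The minorization hypothesis $k(x,y)\geq g(x)$ forces $f(x)\geq g(x)\nu^+(S)$ and $h(x)\geq g(x)\nu^-(S)$; since $\nu^+(S)=\nu^-(S)$, both $\tilde f:=f-g\,\nu^+(S)$ and $\tilde h:=h-g\,\nu^-(S)$ are non-negative, and $f-h=\tilde f-\tilde h$. Hence the elementary bound $|\tilde f-\tilde h|\leq \tilde f+\tilde h$ gives
$$
\|T\nu\|_1=\int_S|f-h|\,dm\leq \int_S(\tilde f+\tilde h)\,dm=\int_S(f+h)\,dm-2\nu^+(S)\int_S g\,dm.
$$
Fubini together with the normalization $\int_S k(x,y)\,m(dx)=1$ yields $\int f\,dm=\nu^+(S)$ and $\int h\,dm=\nu^-(S)$, so the right-hand side equals $\|\nu\|_1\bigl(1-\int_S g\,dm\bigr)$. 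Dividing by $\|\nu\|_1$ and taking the supremum over $\mu_1\neq\mu_2$ gives $\delta(T)\leq 1-\int_S g\,dm$, which is the desired estimate; if $g$ is not a.e.\ zero the bound is strictly less than $1$ and Lemma \ref{lemma:dobrushin} applies.

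I do not expect a genuine obstacle here — the argument is the classical Doeblin/minorization trick adapted to kernels — but two points deserve care in the write-up. First, one must justify that $T$ having a kernel representation together with the isometry-on-positives assumption built into the definition of ``transition operator'' really does force the column-normalization $\int k(x,y)\,m(dx)=1$ (otherwise the equalities $\int f\,dm=\nu^+(S)$ and $\int h\,dm=\nu^-(S)$ would fail). Second, the Jordan decomposition of $T\nu$ need not be $(\tilde f\,dm,\tilde h\,dm)$ — these two densities may overlap — but the inequality $|f-h|\leq \tilde f+\tilde h$ does not require disjointness, so no refinement is needed. After these two bookkeeping remarks the estimate is immediate.
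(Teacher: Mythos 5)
Your proof is correct and is essentially the paper's own argument: both are the Doeblin minorization trick of peeling off $g(x)\,m(dx)$ times the mass and using positivity of what remains. The paper packages this as the operator identity $T=aU+(T-aU)$ with $U\mu\equiv\lambda$ and $\|(T-aU)\|=1-a$, whereas you carry out the equivalent computation directly on the densities of the Jordan decomposition of $\mu_1-\mu_2$; the two bookkeeping points you flag (kernel normalization, non-disjointness of $\tilde f,\tilde h$) are handled correctly.
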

\begin{proof}
Assume $a:=\int_S g(x)m(dx)>0$. Then   $U:\M (S)\to \M (S)$ is a transition operator, where   $U$ has the kernel $\widetilde k(x,y):=a^{-1}g(x)$ for $x,y\in S.$ If we
define $\lambda \in \Prm (S)$ by setting $\lambda (A)=a^{-1}\int_Ag(x)m(dx),$
it follows that
\begin{equation}\label{eq:U}
U\mu = \lambda\qquad {\rm for \; any} \;\; \mu\in \Prm (S).
\end{equation}
The assumption on  $k$ implies that the operator $T-aU$ is positive  and by applying it on probability measures we hence see that it satisfies the norm bound
$$
\|(T-aU): \M(S)\to\M(S)\| =1-a.
$$
Given any two probability measures $\mu_1,\mu_2$ on $S$ we  apply \refeq{eq:U} to compute
$$
\|T\mu_1-T\mu_2\|_1= \|(T-aU)(\mu_1-\mu_2)\|_1\leq (1-a)\|\mu_1-\mu_2\|_1,
$$
and this proves the Theorem.
\end{proof}

The first part of the following corollary yields an alternative proof for Theorem \ref{fixedpoint}. For that end, given a Banach space $E$ and $x_0\in E$, $x'_0\in E'$ the one dimensional operator $y\mapsto \langle x_0',y\rangle x_0$ on $E$ is denoted  as usual by $x'_0\otimes x_0$.
\begin{cor}\label{adjoint} There is $a\in(0,1)$ such that  in the situation of Theorem \ref{fixedpoint} it holds that
\begin{equation}\label{co(i)}
\|\widetilde T^k f-c\widetilde q\|_{L^1(0,1)}\leq (1-a)^k\qquad \textrm{where}\quad c:=\int_0^1f
\end{equation}
 for  all $k\geq 1$ and all functions $f\in L^1(0,1).$

In addition, the fixed points of the dual operator $\widetilde T^*$ consist of constant functions, and we have
\begin{equation}\label{co(ii)}
\|( (\widetilde T^*)^k g- c\|_{L^\infty(0,1)}\leq (1-a)^k\qquad \textrm{where}\quad c:=\int_0^1f\widetilde q
\end{equation}
 for  all $k\geq 1$ and all functions $g\in L^\infty (0,1).$
 \end{cor}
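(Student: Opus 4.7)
The plan is to realize $\tilde T$ as a transition operator on $\Prm([0,1])$ satisfying the hypothesis of Theorem \ref{lemma:kernel}, and then to transfer the resulting exponential estimate to all of $L^1(0,1)$ by linearity and the Jordan decomposition. Taking $S=[0,1]$ with reference measure $m$ equal to Lebesgue measure, and writing $k(\xi,\eta):=\tilde p(\eta,\xi)$ (with $\xi$ the output variable and $\eta$ the input variable), realizes $\tilde T$ as an integral operator of the form required in Theorem \ref{lemma:kernel}. The task therefore reduces to exhibiting a measurable $g\geq 0$, not a.e.\ zero, with $\tilde p(\eta,\xi)\geq g(\xi)$ uniformly in $\eta\in[0,1]$.

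The technical heart is to extract such a lower bound from a single well-chosen summand of $\tilde p(\eta,\xi)=\sum_m p(\eta+m,\xi)$. The monotonicity of $r$ together with the divergence hypothesis $\int_x^\infty s(u)r(u-x)\,du=\infty$ guarantees some $\rho^*$ with $r(\rho^*)>0$; for each $\eta\in[0,1]$ I then select the unique integer $m=m(\eta)$ so that the delay $\tau:=\xi-\eta-m$ lies in a fixed bounded window such as $[\rho^*,\rho^*+1]$. Monotonicity gives $r(\tau)\geq r(\rho^*)>0$, and the exponent $\int_0^\tau r(v)s(v+\eta+m)\,dv$ is bounded by a constant independent of $\eta$ thanks to the $1$-periodicity and local integrability of $s$. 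Hence $p(\eta+m(\eta),\xi)\geq \kappa\,s(\xi)$ for some $\kappa>0$, which provides the desired $g(\xi):=\kappa\,s(\xi)$ with $\int_0^1 g>0$ in the nontrivial case $s\not\equiv 0$. Theorem \ref{lemma:kernel} then yields $\delta(\tilde T)\leq 1-a$ with $a:=\int_0^1 g$, and Theorem \ref{lemma:dobrushin} supplies a unique invariant probability density $\tilde q$ together with $\|\tilde T^k\mu-\tilde q\|_1\leq 2(1-a)^k$ for every probability density $\mu$. The bound (\ref{co(i)}) for an arbitrary $f\in L^1(0,1)$ then follows by decomposing $f=f^+-f^-$, applying the probability-case estimate to the normalised pieces, and recombining linearly (absorbing the standard factor $2\|f\|_{L^1}$ into the constant).

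For the dual claim (\ref{co(ii)}) I first note that $\tilde T^*\mathbf{1}=\mathbf{1}$ since $\int_0^1\tilde p(x,t)\,dt=1$, so the constants are fixed points. Conversely, if $g\in L^\infty$ satisfies $\tilde T^*g=g$, the identity $\langle f,g\rangle=\langle\tilde T^k f,g\rangle$ combined with part (\ref{co(i)}) forces $\int_0^1 fg=(\int_0^1 f)\int_0^1 g\tilde q$ for every $f\in L^1$, whence $g$ coincides a.e.\ with the constant $c:=\int_0^1 g\tilde q$ (so the ``$f$'' in the displayed formula for $c$ in the statement should read ``$g$''). The exponential rate then drops out by duality: with this choice of $c$ and any $f\in L^1$ one has $\langle f,(\tilde T^*)^k g-c\rangle=\langle \tilde T^k f-(\int_0^1 f)\tilde q,\,g\rangle$, so taking the supremum over $\|f\|_1\leq 1$ and invoking (\ref{co(i)}) yields $\|(\tilde T^*)^k g-c\|_{L^\infty}\leq 2\|g\|_\infty (1-a)^k$.

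The main obstacle is the uniform lower bound $\tilde p(\eta,\xi)\geq g(\xi)$. Because the sum defining $\tilde p$ runs over all integer translates, one cannot estimate a single generic term and must instead choose the shift $m=m(\eta)$ adaptively in $\eta$, while simultaneously controlling both the prefactor $r(\tau)$ (which may vanish on an absolute refractory interval) and the exponential damping over an $\eta$-dependent integration window. The $1$-periodicity of $s$ is precisely what keeps the exponent bounded uniformly in $m$, and without it the argument would break.
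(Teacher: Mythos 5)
Your proof is correct and rests on the same machinery as the paper's (Theorem \ref{lemma:kernel} plus the Dobrushin contraction and a duality step), but it differs in two genuine respects. First, you actually \emph{verify} the hypothesis of Theorem \ref{lemma:kernel} by producing the minorant $\tilde p(\eta,\xi)\geq \kappa s(\xi)$ via an $\eta$-dependent choice of the shift $m$ placing $\xi-\eta-m$ in a window where $r>0$, with the exponent controlled by periodicity of $s$; the paper merely asserts that the kernel condition holds ``in our setup,'' so this is the most substantive part of your argument and it is sound (the divergence assumption guarantees both $r\not\equiv 0$ and $\int_0^1 s>0$, and the at-most-polynomial growth of $r$ keeps the exponent finite). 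Second, to pass from the one-step contraction to the $k$-step bound you iterate Theorem \ref{lemma:dobrushin} on probability densities and then use the Jordan decomposition, whereas the paper uses the rank-one projection $1\otimes\widetilde q$ and the algebraic identity $(\widetilde T-1\otimes\widetilde q)^k=\widetilde T^k-1\otimes\widetilde q$ to bound the operator norm directly; the latter is what lets the paper quote $(1-a)^k$ with no extra constant, and it also makes the dual statement immediate by taking adjoints, while your duality computation $\langle f,(\widetilde T^*)^kg-c\rangle=\langle\widetilde T^kf-(\int f)\widetilde q,\,g\rangle$ achieves the same thing. One small caveat: your factor $2\|f\|_{L^1}$ cannot literally be ``absorbed into $(1-a)^k$'' uniformly for all $k\geq 1$ without shrinking $a$, and for small $k$ even that fails when $2(1-a)>1$; the honest conclusion of your route is $\|\widetilde T^kf-c\widetilde q\|_{L^1}\leq C(1-a)^k\|f\|_{L^1}$. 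Since the stated inequality is anyway not scale-invariant (it omits $\|f\|_{L^1}$), this is a defect of the statement rather than of your proof. You are also right that $c$ in \eqref{co(ii)} should read $\int_0^1 g\widetilde q$.
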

 \begin{proof}  Theorem \ref{lemma:kernel} shows that $\| \widetilde T f-\widetilde q\|_1\leq (1-a)$ for all $f\geq 1$ with $\int_0^1f=1.$ By
 considering separately the positive and negative parts of a general $f\in L^1(0,1)$ we deduce that $\|\widetilde T-1\otimes \widetilde q\|_{L^1\to L^1}\leq (1-a).$ By definition $\widetilde T\widetilde q=\widetilde q$ and $\widetilde T^*1=1$ so that $(1\otimes \widetilde q)\widetilde T=\widetilde  T(1\otimes \widetilde q)=1\otimes \widetilde q$. This yields that $(\widetilde T-1\otimes \widetilde q)^k=\widetilde T^k-1\otimes \widetilde q.$ We thus obtain the first statement by noting that
 $$
 \| \widetilde T^k-1\otimes \widetilde q \|_{L^1\to L^1}= \| (\widetilde T-1\otimes \widetilde q)^k\|_{L^1\to L^1}\leq (1-a)^k.
 $$
By taking adjoints in this inequality we see also that
 $$
 \| (\widetilde T^*)^k-1\otimes \widetilde q \|_{L^\infty \to L^\infty}\leq (1-a)^k,
 $$
 which clearly implies the second part of the Corollary.
 \end{proof}

\subsection*{The integrate-and-fire neuron}
We next apply the above notions to find unique periodic solutions to the periodic differential equations in our model and prove exponential convergence of any solution to the periodic one.
Consider the linear and time-dependent differential equation in $\R^n$
\begin{equation}\label{eq:system}
x'(t)=Q(t)x(t),
\end{equation}
where $x:[0,\infty)\to\R^n$ and $Q:[0,\infty )\to \R^{n\times n}$ is a continuous\footnote{One can easily weaken the assumption of continuity} matrix valued function that is 1-periodic.
We assume that for each $t$ the matrix $Q$ is {\sl probability generating}, i.e. its column sums are zero, diagonal elements are non-positive and other elements non-negative.
\begin{theorem}\label{lemma:floquet} Assume that  for some $0\leq t_0<t_1\leq 1$
and for some constant $c>0$ the 1-periodic probability generating matrix $Q(t)=(q_{i,j})$ in {\rm \refeq{eq:system}} satisfies the condition
$$
q_{j,j-1}(t)\geq c\qquad {\rm for}\;\;  t\in (t_0,t_1), \quad j=1,2,\ldots ,n,
$$
where we interpret $q_{1,0}=:q_{1,n}.$
Then the system \refeq{eq:system} admits a unique 1-periodic  solution (that is a probability distribution)  and all  solutions  (that are probability distributions) converge with exponential rate towards this solution as $t\to\infty$.
\end{theorem}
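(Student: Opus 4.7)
The plan is to reduce the problem to Theorem \ref{lemma:dobrushin} applied to the period-$1$ map of the flow. Let $\Phi(t,s)$ denote the fundamental solution of \refeq{eq:system}. Since $Q(t)$ is probability generating for each $t$, the flow preserves the simplex $\Delta_n:=\{v\in\R^n:v_i\geq 0,\ \sum_i v_i=1\}$, so the monodromy $T:=\Phi(1,0)$ is a column-stochastic matrix, which I view as a transition operator on $\M(\{1,\dots,n\})$ equipped with counting measure. One-periodic solutions valued in $\Delta_n$ correspond precisely to fixed points of $T$ in $\Delta_n$, via $v^*(t)=\Phi(t,0)v^*$.

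The key step is the Doeblin-type minorization $T_{ij}\geq\alpha$ for all $i,j$ with some $\alpha>0$. Granted this, Theorem \ref{lemma:kernel} (with counting measure as reference and $g\equiv\alpha$) gives $\delta(T)\leq 1-n\alpha<1$, and Theorem \ref{lemma:dobrushin} supplies a unique fixed point $v^*\in\Delta_n$ with
$$\|T^k v_0-v^*\|_1\leq 2(1-n\alpha)^k\qquad (k\geq 1,\ v_0\in\Delta_n).$$
To transfer this integer-time bound to continuous time, I write $t=k+s$ with integer $k\geq 0$ and $s\in[0,1)$: the periodic solution is $v^*(t)=\Phi(s,0)v^*$ and an arbitrary solution obeys $v(t)=\Phi(s,0)T^k v(0)$, so the compactness bound $C:=\sup_{s\in[0,1]}\|\Phi(s,0)\|<\infty$ preserves the exponential rate.

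The main obstacle, and the only step that exploits the specific subdiagonal hypothesis, is establishing the minorization; my plan is an iterated Gronwall estimate. Fix $j$ and set $v(t):=\Phi(t,0)e_j$, $M:=\sup_{t,k}|q_{kk}(t)|<\infty$. Non-negativity of the off-diagonal entries of $Q(t)$ together with $v(t)\geq 0$ gives $v_i'(t)\geq q_{ii}(t)v_i(t)\geq -Mv_i(t)$ for every $i$, whence $v_j(t)\geq \e^{-Mt}$ on $[0,1]$. On $(t_0,t_1)$ the hypothesis $q_{i,i-1}(t)\geq c$ upgrades this to $v_i'(t)\geq c\,v_{i-1}(t)-Mv_i(t)$ (indices mod $n$), and variation-of-constants combined with induction on $k$ yields
$$v_{j+k}(t)\geq \frac{c^k(t-t_0)^k}{k!}\e^{-Mt},\qquad t\in[t_0,t_1],\ k=0,1,\dots,n-1.$$
Evaluating at $t=t_1$ gives a positive lower bound for every component, and the Gronwall estimate applied again on $[t_1,1]$ preserves positivity up to a further factor $\e^{-M(1-t_1)}$. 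Taking the minimum over the finitely many pairs $(i,j)$ produces the uniform $\alpha>0$ required, after which the rest of the argument is routine.
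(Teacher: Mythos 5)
Your proposal is correct and follows essentially the same route as the paper: reduce to the Dobrushin contraction of the monodromy matrix $\Phi(1,0)$ via Theorems \ref{lemma:dobrushin} and \ref{lemma:kernel}, and obtain the entrywise (Doeblin) minorization by propagating positivity around the cyclic chain with iterated Gronwall/variation-of-constants inequalities. The only cosmetic difference is that you minorize $\Phi(1,0)$ directly (controlling the pieces $[0,t_0]$ and $[t_1,1]$ by Gronwall), whereas the paper bounds $\delta\big(\Phi(t_1,t_0)\big)$ and invokes the submultiplicativity \refeq{eq:dobrushinproduct}.
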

\begin{proof}
Denote by $\Phi(t,s)$ the transition (operator) matrix of the flow map, so that  $x(t)=\Phi(t,s)x(s)$
for all $0\leq s\leq t$ regardless of the initial values.
We then know {\em a priori} that the coefficients of $\Phi$ cannot be negative, since solutions with initial condition inside the positive cone stay inside the positive cone.
For the proof of the lemma it is enough to show that
\begin{equation}\label{eq:enough}
a:=\delta\big(\Phi(t_1,t_0)
\big) <1.
\end{equation}
Namely, then  \refeq{eq:dobrushinproduct} and the equality $\Phi(1,0)
= \Phi(1,t_1,) \Phi(t_1,t_0) \Phi(t_0,0)$ shows that
$$
\delta \big(\Phi(1,0)\big)\leq a<1,
$$
 whence Theorem \ref{lemma:dobrushin} verifies that there is a unique probability vector (probability distribution on $\{ 1,2,\ldots n\}$) that solves $\Phi (1,0)\pi =\pi$. This implies that there is a unique 1-periodic solution to the flow. Finally,  for any integer $m\geq 1$ the  periodicity of $A$ yields that $\Phi (m,0)= (\Phi(1,0))^m$, whence the second statement of Theorem \ref{lemma:dobrushin}
shows that $\| x(m)-\pi\|_1\to 0$ with exponential rate and this clearly yields the stated exponential convergence of the flow to the periodic solution.
By Theorem \ref{lemma:kernel} it is enough to check that the transition matrix $\Phi (t_1,t_0)$ satisfies $\Phi (t_1,t_0)_{j,k}>0$ for all $1\leq j,k\leq n.$ Equivalently, we need to show that
\begin{equation}\label{eq:enough2}
x_j(t_1)>0 \quad {\rm for\; all}\;\; j\in\{ 1,2,\ldots ,n\},
\end{equation}
where $x$ solves the flow $x'=Q(t)x$ on $t\in[t_0,t_1]$ with initial condition
$x_j(t_0)=1$ for $j=k$, and $x_j(t_0)=0$ for $j\not=k$. By the cyclic symmetry of the situation we may as well assume that $k=1$.
 Then the first equation and the  assumed properties of $Q(t)$
yield the differential inequality (the constants $c,C>0$ below are fixed for all $t\in[t_0,t_1]$)
$$
x_1'\geq  -Cx_1 \quad {\rm with}\;\;  x_1(t_0)=1
$$
which  shows that
\begin{equation}\label{eq:x1}
x_1\geq b_1\qquad {\rm for}\;\;  t_0\leq t\leq t_1,
\end{equation}
where $b_1>0$ depends only on $C$ and $t_1-t_0$. Then the last equation implies
$$
x_n'\geq cx_1-Cx_n\geq cb_1-Cx_n \qquad {\rm with }\;\;  x_n(t_0)=0
$$
and  standard integration gives
\begin{equation}\label{eq:xn}
x_n(t)\geq b_n(t-t_0)\qquad {\rm for}\;\;  t_0\leq t\leq t_1.\nonumber
\end{equation}
Iterating, the  differential inequalities $x'_k\geq  cx_{k-1}-Cx_k$  finally produce
the bound
 \begin{equation}\label{eq:xk}
x_k(t)\geq b_k(t-t_0)^{n+1-k}\qquad {\rm for}\;\;  t_0\leq t\leq t_1\quad {\rm and}\;\; 2\leq k\leq n.
\end{equation}
The statement  \refeq{eq:enough2} follows from   \refeq{eq:x1} and \refeq{eq:xk}, where one notes that the obtained bound depends only on $n,c,C$ and
$t_1-t_0$.
\end{proof}
In the applications, the infinitesimal generator of the process is
given by the matrix\begin{equation*}
Q(t) =
\begin{bmatrix}
-s(t) & {s}^+(t)  &   & & &  \\
{s}^-(t)  & -s(t) & {s}^+(t) & & &  \\
  &      \ddots  & \ddots  & \ddots& \\
  &      &           {s}^-(t)& -s(t)& {s}^+(t) \\
{s}^+(t) &      &            & {s}^-(t) & -{s}^+(t) \\
\end{bmatrix},
\end{equation*}
with periodic and piecewise continuous  functions $s^+$, $s^-$ and $s= s^+ + s^-$. The transition mapping $\Phi = \Phi(0,1)$
then satisfies the required condition
$$\delta(\Phi) < 1$$
Indeed, the conditions of the previous theorem are clearly satisfied in our setup.

\subsection*{Expectation values}
For a periodic density $s(t)$, the expected time span between two consecutive firings, given that the neuron fired at $T_0=x$ is $$
E(x)= \int (t-x) p(x,t) dt
$$
The expectation $E(x)$ is a periodic function
\begin{eqnarray*}
E(x+1)&=& \int (t-x-1) p(x+1,t) dt\\
&=& \int (t-x) p(x,t-1) dt\\
&=& \int (t-x) p(x,t) dt = E(x)
\end{eqnarray*}
The expected $k$:th firing delay is calculated recursively from
\begin{eqnarray*}
E_1(x)&=& E(x)\\
E_k(x)&=&  \int (t-x) p_k (x,t) dt\\
&=&  \int (t-x) \int p(x,z) p_{k-1}(z,t) dz dt\\
&=&\int \int p(x,z)(t-z) p_{k-1}(z,t) dz dt +\int \int (z-x) p(x,z) p_{k-1}(z,t) dz dt\\
&=&\int E_{k-1}(z)p(x,z) dz + E_1(x)\\
&=& T^*E_{k-1}(x) + E_1(x)
\end{eqnarray*}
As a consequence
\begin{eqnarray*}
E_k(x) -E_{k-1}(x) &=& T^*(E_{k-1}(x) - E_{k-2}(x)) \\
  &=& T^{*k-2} (E_2(x) - E_1(x))\\
  &=& T^{*k-1} E(x)
\end{eqnarray*}
As before, we shall denote by $\widetilde g$ the restriction of a 1-periodic function on $\R$ to the interval $(0,1).$
By \eqref{co(ii)} the following limit exists in $L^\infty(0,1)$
\begin{equation*}
\widetilde m(x)= \lim_{k\rightarrow \infty}(\widetilde T^*)^{k} E(x) =\lim_{k\rightarrow \infty}(\widetilde E_k(x)-\widetilde E_{k-1}(x)),
\end{equation*}
and the rate of convergence is exponential.
Then $\widetilde m(x)$ is a fixed point of $\widetilde T^*$, and Corollary \ref{adjoint} verifies that it is a constant function.
Function  $m(x)$ is  periodic
hence it is a constant.

Note that
\begin{eqnarray*}
\lim_{k\rightarrow \infty}\frac{E_k(x)}{k}&=& \lim_{k\rightarrow \infty}\frac{1}{k}(T^{*k-1}E_1(x) + ... + E_1(x))\\
&=& m
\end{eqnarray*}
Both the differences $ E_k(x)-E_{k-1}(x)$ and the mean $ \frac{E_k(x)}{k}$ thus converge uniformly to the constant $m$.
If $q$ is a 1-periodic fixed point of $T$ and hence of $\tilde{T}$, then
\begin{eqnarray*}
(q,m) &=& m \int_0^1 q(x) dx\\
&=& \lim_{k\rightarrow \infty} (q,\tilde{T}^{*k} E) \\
&=& \lim_{k\rightarrow \infty}(\tilde{T}^{*k} q, E)\\
&=& (q,E)
\end{eqnarray*}
The constant is then given as
$$
m=\frac{\int_0^1 q(x) E(x) dx}{\int_0^1 q(x)dx}
$$
We have thus shown:
\begin{theorem}
Assume that $E\in L^{\infty}$.
Both the differences $E_K -E_{k-1}$ and the mean $\frac{E_k}{k}$ converge uniformly and exponentially
towards the constant m.
\end{theorem}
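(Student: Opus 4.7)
The plan is to exploit the two observations already in place just before the theorem statement: the telescoping identity $E_k - E_{k-1} = (\widetilde T^*)^{k-1} E$ obtained from the recursion $E_k = T^* E_{k-1} + E_1$, and the $L^\infty$ spectral gap for $\widetilde T^*$ furnished by Corollary \ref{adjoint}, part \eqref{co(ii)}. Once these are combined, the convergence of the differences to a constant is immediate; the convergence of the Cesàro averages follows by a standard telescoping/averaging argument.

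First I would note that, by hypothesis, the restriction $\widetilde E := E|_{[0,1)}$ lies in $L^\infty(0,1)$, so Corollary \ref{adjoint} applies to give
\[
\bigl\|(\widetilde T^*)^{k} \widetilde E - c\bigr\|_{L^\infty(0,1)} \leq (1-a)^k \|\widetilde E\|_\infty, \qquad c = \int_0^1 \widetilde E\, \widetilde q,
\]
for the contraction constant $a\in (0,1)$ coming from Theorem \ref{lemma:kernel}. Since $E$ is $1$-periodic and $(\widetilde T^*)^{k-1}\widetilde E$ is the periodic representative of $(T^*)^{k-1} E$, this gives exactly the uniform exponential convergence of $E_k - E_{k-1}$ to the constant $m = c$ claimed. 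The identification of $c$ with the formula $m = \int_0^1 qE\,dx/\int_0^1 q\,dx$ derived in the preceding pages uses that the unique (up to scaling) normalized fixed point $\widetilde q$ of $\widetilde T$ coincides with a scalar multiple of the periodic output density $q$, which is the content of Theorem \ref{fixedpoint}.

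For the Cesàro mean I would telescope: $E_1 = E$ and $E_k - E_{k-1} = (T^*)^{k-1}E$ give $E_k = \sum_{j=0}^{k-1} (T^*)^j E$, hence
\[
\Big\|\frac{E_k}{k} - m\Big\|_\infty \leq \frac{1}{k}\sum_{j=0}^{k-1}\bigl\|(T^*)^j E - m\bigr\|_\infty \leq \frac{1}{k}\sum_{j=0}^{\infty}(1-a)^j\|\widetilde E\|_\infty = \frac{\|\widetilde E\|_\infty}{ka}.
\]
This yields uniform convergence of $E_k/k$ to $m$ at the explicit rate $O(1/k)$. Strictly speaking the rate here is only polynomial, not exponential as for the differences, but the uniform convergence of both objects — which is the substantive content of the theorem — follows at once.

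The only real obstacle is bookkeeping: one must verify that the operator iteration on periodic functions on $\R$ really coincides with the iteration of $\widetilde T^*$ on the quotient $[0,1)$, which is exactly the periodization calculation carried out earlier in the text for $\widetilde T$, and transfers to the adjoint by duality. Once this identification is made, no extra technical input beyond Corollary \ref{adjoint} is needed.
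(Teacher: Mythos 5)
Your proof follows the paper's own argument essentially verbatim: the telescoping identity $E_k-E_{k-1}=(\widetilde T^*)^{k-1}E$ from the recursion $E_k=T^*E_{k-1}+E_1$, combined with the $L^\infty$ spectral gap of Corollary \ref{adjoint}, estimate \eqref{co(ii)}, and then Ces\`aro averaging for $E_k/k$. Your observation that the mean can only be shown to converge at rate $O(1/k)$ rather than exponentially is correct and in fact flags an overstatement in the theorem as formulated --- the paper's own argument likewise yields only uniform, not exponential, convergence for $E_k/k$.
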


Remark.\\
Theorem \ref{constant} shows that for the refractory neuron

\begin{equation*}
\int_0^1 q(x) E(x) dx = 1
\end{equation*}

In the absence of a refractory period this is easily proved:
\begin{eqnarray*}
E(x)&=& \int_x^{\infty} (t-x) p(x,t) dt\\
&=&\int_x^{\infty} (t-x) (-\frac{d}{dt} e^{-\int_x^t s(u) du}) dt\\
&=&\int_x^{\infty} e^{-\int_x^t s(u) du} dt
\end{eqnarray*}
\begin{eqnarray*}
\frac{d}{dx}E(x)&=& -e^{-\int_x^x s(u)du} +  \int_x^{\infty} s(x) e^{-\int_x^t s(u) du} dt\\
&=& -1 + s(x) E(x)
\end{eqnarray*}
Since $E(x)$ is periodic
\begin{equation*}
\int_0^1 s(x) E(x) dx = \int_0^1 dx + \int_0^1 \frac{d}{dx}E(x) dx = 1
\end{equation*}
It has already been shown that $q(t) =s(t)$ in this special situation.
\end{document}